\theoremstyle{definition}
\theoremstyle{remark}
\newtheorem{theorem}{Theorem}
\begin{document}
	\title{MIMO Radars and Massive MIMO\\Communication Systems can Coexist}
	\author{\IEEEauthorblockN{Aparna Mishra, Ribhu Chopra} 
		
		\thanks{A. Mishra and R. Chopra are with the Department of Electronics and Electrical Engineering, Indian Institute of Technology Guwahati, Assam, India.  (emails: m.aparna@iitg.ac.in, ribhu@outlook.com).
			
		}
	}
	\maketitle		
	\begin{abstract}
	  In this paper, we investigate the coexistence of a single cell massive MIMO communication system with a MIMO radar. We consider the case where the massive MIMO BS is aware of the radar's existence and treats it as a non-serviced user, but the radar is unaware of the communication system's existence and treats the signals transmitted by both the BS and the communication users as noise. Using results from random matrix theory, we derive the rates achievable by the communication system and the radar. We then use these expressions to obtain the achievable rate regions for the proposed joint radar and communications system. We observe that due to the availability of a large number of degrees of freedom at the mMIMO BS, results in minimal interference even without co-design. Finally we corroborate our findings via detailed numerical simulations and verify the validity of the results derived previously under different settings. 
	\end{abstract}
	
	\begin{IEEEkeywords}
		 Joint Radar and Communication, MIMO Radar, massive MIMO, Performance Analysis, Radar Communication Co-existence.
\end{IEEEkeywords}
	\section{Introduction}

    Recently, the design of joint radar and communications~(JRC) systems, especially that of jointly designed communication and sensing~(JCAS) systems has become an active area of research~\cite{NHTSA,need_of_coexistence_survey_2017,Liu_JSAC_2022}. This is due to the improved spectral efficiencies and hardware costs offered by these systems. Moreover, these systems are seen as key enablers for the paradigm of intelligent transportation systems~(ITS), popularly known as smart vehicles. In general, there exist three approaches towards the design of JRC systems, viz.  coexistence, cooperation, and co-design~\cite{chiriyath2017radar}. As apparent from the name, the coexistence based approach considers the performance of the two subsystems designed separately treating the signals from each other as interference. In this case the two subsystems may or may not be aware of each other's existence. In the cooperation based approach, the two systems might be designed separately but are aware of each others presence and cooperate to mitigate interference. Finally, a co-designed JRC system considers a scenario where the two subsystems are designed jointly to maximize each other's performance. 
It is apparent that while the co-design based approach optimizes the performances of both the underlying subsystems it is unsuited for most legacy hardware and necessitates a hard reboot of the system architecture. On the other hand, the coexistence based approach is the least disruptive approach for legacy hardware. Therefore, in this paper we focus on the coexistence based design of a JRC system~\cite{liu2020joint}.

Over the last decade, the idea of using a large number of antennas in both radar and communications systems, dubbed massive multiple input multiple output~(MIMO), has also gained much traction~\cite{intro_MMIMO,myths,marzetta2016fundamentals,mMIMO_radar}.  While the literature dealing with radar technologies has focused more on conventional MIMO radars~\cite{mahal2017spectral}, with limited focus on massive MIMO~\cite{mMIMO_radar}; massive MIMO has been established as a front runner technology for next generation wireless communication systems. 
 Massive MIMO systems have been shown to be resilient to jamming~\cite{uplink_mMIMO_2020} and other forms of in-band interference making them ideal candidates for sharing spectrum with radars. These features, coupled with the fact a BS with a large antenna array can easily form a null to minimize interference to a co-existent radar subsystem make massive MIMO communication systems ideal candidates for coexistence based JRC systems. Consequently, in this paper we study the performance of a JRC system where a MIMO radar co-exists with massive MIMO communication system.

  	\subsection{Related Work}
 In the context of coexistence based JRC systems, the authors in~\cite{Reed_2016_WCL} and~\cite{bell_2014_ICEAA} have experimentally demonstrated the detrimental effects of the presence of a proximal in-band radar on communications systems. In~\cite{Rathapon_2012_selectedareasincomm} the idea of opportunistic spectrum sharing  between a rotating radar, and a cognitive communication system is analysed. 
The authors in~\cite{Sodagari_2012_NSP_GLOBECOM,Khawar_IEEE_Sensors_Journal_2015} have considered the use of null space projection~(NSP) for achieving radar communication coexistence. Under this approach the system with a larger number of degrees of freedom projects its signal onto the null space of the interference channel to the system with smaller number of degrees of freedom. The present literature pertaining to NSP based JRC mostly focuses on MIMO radars having a larger number of degrees of freedom preventing interference to SISO systems~\cite{Sodagari_2012_NSP_GLOBECOM,Babaei_2013_GLOBECOM_NSPexpansion}.

  Since coexistence based JRC systems witness a performance degradation in both radar and communication subsystems, it becomes important to quantify these losses on the same scale in order to appropriately identify the underlying trade-offs. Consequently, authors in~\cite{chiriyath2015inner} have introduced the idea of ``radar information rate" as an analogue of the achievable rate of a communication system. The formulation of the radar information rate models the target as an unwilling source of information, with the radar receiver acting as the sink. The radar information rate is then defined as the mutual information between the unwilling source and the sink. Consequently, a JRC system can be viewed as multiple access channel~(MAC)~\cite{cover1999elements} consisting of a radar subsystem and a communication subsystem whose overall performance can be analysed in terms of the achievable rate regions of these two subsystems~\cite{chiriyath2015inner,chiriyath2017radar,Yu_SAM_2018,Chiri_TAE_2019,Bliss_RadCon_2014}.  Alternatively, the authors in \cite{guerci2015joint} have considered each resolution cell of the radar as a constellation point and defined the "channel capacity" of the radar as the maximum information contained in the echo signal. However, in this work we evaluate the performance of the JRC system in terms of rate regions between the achievable communication rate and the radar information rate due to the simplicity and intuitiveness of this approach. The performance of massive MIMO communications systems is mostly quantified in terms of the per user achievable rates~\cite{Chopra_TWC_2018} that are well characterized by the logarithms of the deterministic equivalents~(DEs) of their signal to interference-plus-noise ratios~(SINRs)~\cite{RMT,Kolomvakis2017massive}. In this paper as well, we analyse the performance of the underlying massive MIMO system in terms of achievable rates obtained via DE analysis. 
  
  It is also important to note that despite the recent advances in millimetre wave communication technologies, the sub-6~GHz spectrum remains preferred for long distance communications, much of which has been dedicated for radar usage~\cite{Opportunistic_Sharing_2015}. Therefore, our system model is built around the sub-6~GHz rich scattering channel model~\cite{Jakes}.
  

  The idea of massive MIMO enabled JRC has recently been explored in the literature~\cite{Buzzi_Asilomar_2019,Buzzi_ITG_2020,uplink_mMIMO_2020}. Out of these only~\cite{uplink_mMIMO_2020} discusses a coexistence based JRC system. However, the underlying analysis  is limited to studying the effect of radar interference on the uplink of a massive MIMO system. In contrast, in this paper, we analyse the problem  from  perspectives of both the communication system and the radar, using the achievable rate regions as a performance metric over an entire communication frame (i.e. both uplink and downlink). Also, instead of the use-and-then-forget bounds~\cite{marzetta2016fundamentals} used in~\cite{uplink_mMIMO_2020} for evaluating the performance of massive MIMO systems, we use deterministic equivalent analysis~\cite{Papa_TVT_2016}, that results in better approximations of achievable SINRs for MMSE type receivers. 
	\subsection{Contributions}

  In this paper, we use rate regions to characterize the performance of a coexistence based JRC system comprising a single cell massive MIMO communication system and a static MIMO radar over a full communication frame. Our specific contributions are enumerated as follows:
    \begin{enumerate}
    	\item We first determine the channel estimation performance of the massive MIMO system with uplink training in the presence of radar generated interference, and obtain expressions for consequent channel estimation mean squared error~(MSE). Similarly, we obtain the MSE of the angle of arrival estimate at the radar in the presence of the interference generated by the communication system. These expressions are then used to characterize a trade-off between the pilot powers employed by the users and the radar transmit power~(See Section~III.).
    	\item Following this, using DE analysis, we obtain expressions for uplink achievable rates using MMSE combining at the BS in the presence of radar generated interference. We then derive the Cramer Rao Bound~(CRB) on the angle of arrival (AoA) estimate at the radar, in the presence of interference caused due to uplink transmission by the users, and use it to form an upper bound on the radar rate~(See Section~IV.).  
    	\item We then derive the DEs for the downlink SINR at the users, assuming regularized zero forcing~(RZF) beamforming at the BS, in the presence of radar generated interference. We assume that the RZF beamforming at the BS also forms a null in the direction of the available estimate of the radar channel, and use this information to calculate the CRB on the AoA estimation performance and the corresponding radar rates~(See Section~V.).
    	\item Via extensive numerical simulations we validate our derived results, and plot the achievable rate regions for our JRC system for various use cases. We find that the availability of a large number of degrees at the BS results in minimal interference to both the constituents of the JRC systems, resulting in a significantly convex rate regions~(See Section~VI.).
    \end{enumerate}
We can thus conclude that coexistence based design of JRC systems is possible in the massive MIMO regime, thus allowing for the addition of sensing capabilities to legacy systems, without the need for an extensive redesign. We next describe the system model considered in this work.

\textcolor{black}{\subsection{Notation}
Throughout this paper, lower case letters indicate scalar quantities, lower case $(\mathbf{a})$ and the upper case $(\mathbf{H})$ bold face characters respectively indicate column vectors and matrices. $ (.)^H, (.)^T, (.)^*,$ respectively represent Hermitian, transposition and complex conjugation operation on vectors and matrices. $\Re\{.\}$ and $\Im\{.\}$ respectively represent the real and imaginary parts of a complex number. $\text{Tr}(.)$ denotes the trace of a matrix. $ \underset{M \rightarrow \infty}{\overset{a.s.}\longrightarrow}$ indicates almost sure convergence. $\text{vec}(\mathbf{A})$ represents the vectorization of matrix $\mathbf{A}$, $\text{diag}( \mathbf{a})$ represents the diagonalization of vector $\mathbf{a}$, $E[.]$ represents the expectation operator and $\mathbf{I}_M$ represents the identity matrix of order $M$. $M, N_t $ and $N_r$ denote the number antennas at the BS, radar transmitter and radar receiver respectively. $K$ denotes the number of UEs. $N$ denotes the total transmission time instants. $\sigma_r^2$, $\epsilon_{u,s,k}$ and $\epsilon_{d,s,k}$ denote the transmit powers of the radar, the $k$th UE and $k$th BS antenna respectively. $\beta$ denotes the large scale fading coefficient. $\mathcal{CN}(\boldsymbol{\mu},\boldsymbol{\Sigma})$ represents a circularly symmetric complex Gaussian~(CSCG) random vector with mean vector $\boldsymbol{\mu}$  and covariance matrix $\boldsymbol{\Sigma}$.} 

\section{System Model}
We consider an RCC system comprising a single cell massive MIMO subsystem coexisting with an in-cell MIMO radar as shown in Fig.~\ref{f1}. The two subsystems are assumed to transmit over the same time frequency resources with a full bandwidth overlap, but are assumed to have only non line of sight~(NLoS), rich scattering interference channels. We next describe the system and signal models for the two subsystems individually. 
 \begin{figure} [t]
	\centering
	\includegraphics[width= \linewidth]{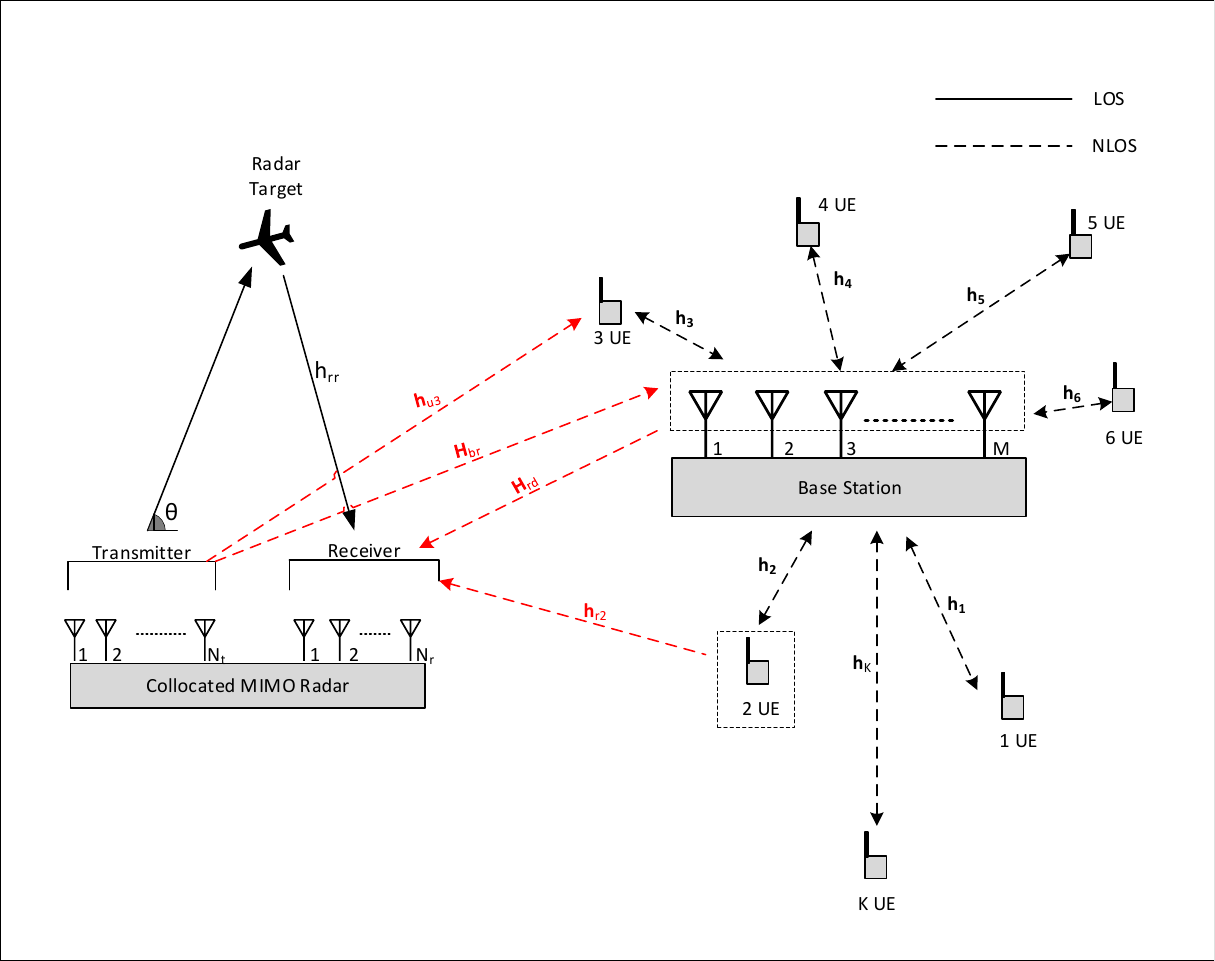}
	\caption{The System Model}
	\label{f1}
\end{figure}
 \subsection{The Radar Subsystem}
 We consider a mono-static pulsed MIMO radar equipped with collocated transmit and receiver antenna arrays, as shown in Fig.~\ref{f1}. The radar subsystem consists of $N_t$ transmit antennas and  $N_r$ receive antennas. Since the radar is mono-static, the transmit and receive antenna arrays can safely be assumed to be synchronized, allowing for coherent processing of transmit and receive signals. We assume that a single target is present in the LOS of the radar, at an angle $\theta$, such that the array response vectors of the transmit and receive arrays are respectively given by $\mathbf{a}_t^T(\theta)$ and $\mathbf{a}_r^T(\theta)$. We also let $h_{rr}$ denote the reflection coefficient of the said target that accumulates the effects of propagation attenuation, phase shifts, and the radar cross section of the target. Now, let $\mathbf{s}[n] \in \mathbb{C}^{N_{t} \times 1}$  be the signal transmitted by the radar at the $n$th instant, such that $E[\mathbf{s}[n]\mathbf{s}^H[n]]=\mathbf{R}_{ss}=\sigma_r^2\mathbf{I}_{N_t}$, with $\mathbf{I}_K$ representing the order $K$ identity matrix. Then, the signal received by the radar, denoted by $\mathbf{z}[n]\in\mathbb{C}^{N_r \times 1}$, in the absence of any communication interference and in multipath free propagation~\cite{li2007mimo}, can be expressed as,
 \begin{equation}
 			\mathbf{z}[n] =  h_{rr} \mathbf{A}(\theta) \mathbf{s}[n]+\sqrt{N}_0 \mathbf{w}_r[n], \\
  \end{equation}
 where $\mathbf{w}_r[n]$ denotes the temporally and spatially white, zero mean circularly symmetric complex Gaussian~(ZMCSCG) additive noise, and $\mathbf{A}(\theta)=\mathbf{a}_r(\theta) \mathbf{a}^T_t(\theta)$. We also assume that the target is moving slowly with respect to the radar, and we can ignore the Doppler shift within a pulse~\cite{sun2015mimo}.   
 
 \subsection{The Communication Subsystem}  
 We consider a single cell massive MIMO communication system operating in the time division duplexed~(TDD) mode with a BS equipped with $M$ antenna elements serving $K$ single antenna user equipments~(users). We assume the channels between the BS and users to be \textcolor{black}{correlated} with frequency flat rich scattering. We let $\sqrt{\beta_k} \mathbf{h}_{k} \in \mathbb{C}^{M \times 1} $ denote the channel vector between the BS and the $k${th} user with $\beta_k$ and  \textcolor{black}{$\mathbf{h}_{k} \sim \mathcal{CN} (\mathbf{0},\mathbf{\Sigma}_{k})$} representing the large scale and small scale fading coefficients, respectively, \textcolor{black}{where $\mathbf{\Sigma}_{k} \in \mathcal{C}^{M \times M}$ is the covariance matrix of $\mathbf{h}_{k}$}. 
 
The communication frame is divided into three sub-frames, viz. channel estimation, uplink data transmission, and downlink data transmission. In the first sub-frame, spanning $K$ channel uses, the users transmit orthogonal pilot signals that are received by the BS and are used to form MMSE estimates of the BS to user channels. Following this, during the second sub-frame, spanning $\tau_u$ channel uses, the users transmit uplink data, and the BS uses the available channel estimates to effectively decode this data via MMSE combining~\cite{Chopra_TWC_2018}. Finally, during the downlink data transmission sub-frame spanning $\tau_d$ channel uses, the BS, under the assumption of channel reciprocity~\cite{Chopra_TSP_2020,Papazafeiropoulos2017ageing}, uses the available channel estimates to appropriately beamform and transmit data to the users. We next describe the signal models for these sub-frames.  
 \subsubsection{Channel Estimation} Let the $k$th user transmit a pilot signal $\psi_k[n]$ for $n\in[1,K]$, with an energy $\epsilon_{u,p,k}$ such that $\sum_{n=1}^{K}\psi_k[n]\psi^*_l[n]=\delta[k-l]$, with $\delta[n-k]$ representing the Kronecker delta function. Then, the signal vector received by the BS at the $n$th instant without accounting for radar interference can be expressed as
    \begin{equation}
	\mathbf{y}[n]= \sum_{k=1}^{K} \sqrt{\beta_k\epsilon_{u,p,k}} \mathbf{h}_{k} \psi_k[n]+\sqrt{N_0} \mathbf{w}_b[n],
\end{equation}  
 with  $\mathbf{w}_b[n]$ representing the temporally and spatially white ZMCSCG additive noise with unit variance.  
 \subsubsection{Uplink Transmission} Letting the $k$th user transmit the data symbol $x_k[n]$ ($E[x_k[n]x_l[m]^*]=\delta[n-m]\delta[k-l]$) at the $n$th instant with an energy $\epsilon_{u,s,k}$, we can write the signal received at the BS in the absence of any radar interference as, 
     \begin{equation}
 	\mathbf{y}[n]= \sum_{k=1}^{K} \sqrt{\beta_k\epsilon_{u,s,k}} \mathbf{h}_{k} x_k[n]+\sqrt{N_0} \mathbf{w}_b[n].
 \end{equation}
\subsubsection{Downlink Transmission} Letting $\mathbf{Q} \in \mathcal{C}^{M \times K}$ denote the preceding matrix at the BS, $\epsilon_{d,s,k}$ the downlink symbol energy for the $k$th user such that the corresponding symbol sent to the $k$th user at the $n$th instant is $p_k[n]$, we can write the downlink signal received at the $k$th user at the $n$th instant as
\begin{equation}
	r_k[n]=\mathbf{h}_k^T\mathbf{Q}\text{diag}(\sqrt{\boldsymbol{\epsilon_{d,s}}})\mathbf{p}[n]+\sqrt{N_0}w_{k}[n],
\end{equation} 
where $\mathbf{p}[n]=[p_1[n],p_2[n],\ldots,p_K[n]]^T$, and $\boldsymbol{{\epsilon}_{d,s}}=[{\epsilon}_{d,s,1},\ldots,{\epsilon}_{d,s,K}]^T$, and $w_{k}[n]$ is the ZMCSCG noise with unit variance at the $k$th user.

  \subsection{Interference Channel Models}
  As stated earlier, the MIMO radar and massive MIMO communication sub-systems do not have line of sight interference channels. Since both the BS and Radar are fixed, the interference channels between them, denoted by $\mathbf{G}_{rb}\in\mathbb{C}^{M\times N_t}$ and  $\mathbf{G}_{br}\in\mathbb{C}^{N_r\times M}$, respectively, for the radar transmit and receive arrays are also assumed to be time invariant. Now, in accordance with the rich scattering assumption, the entries of $\mathbf{G}_{rb}$ and $\mathbf{G}_{rb}$ are assumed to be independent and identically distributed~(i.i.d.) ZMCSCG with a variance $\eta_{I}=\min(d^{-\alpha}_{br},1) $ with $d_{br}$ being the distance between the BS and the radar. Also, their MMSE estimates, respectively, given by  $\hat{\mathbf{G}}_{rb}$ and $\hat{\mathbf{G}}_{br}$ are assumed to be available at the BS such that,
  \begin{equation}
  	{\mathbf{G}}_{rb}=\hat{\mathbf{G}}_{rb}+\tilde{\mathbf{G}}_{rb},\quad
	{\mathbf{G}}_{br}=\hat{\mathbf{G}}_{br}+\tilde{\mathbf{G}}_{br},
\end{equation}
where $\tilde{\mathbf{G}}_{rb}$ and $\tilde{\mathbf{G}}_{br}$ are estimation errors orthogonal to $\hat{\mathbf{G}}_{rb}$ and $\hat{\mathbf{G}}_{br}$, respectively, and their entries have a variance $\eta_{e}$. \textcolor{black}{In case the BS is unable to estimate the interference channels, then $\hat{\mathbf{G}}_{rb}$ and $\hat{\mathbf{G}}_{br}$ are set to zero and $\eta_e=\eta_I$.} During the training and uplink data transmission phases, the BS uses $\hat{\mathbf{G}}_{rb}$ to form nulls in the directions containing the radar interfering signals to minimize their effects on channel estimation and uplink performances. Similarly, during the downlink transmission, the BS forms nulls in the direction of $\hat{\mathbf{G}}_{br}$ to minimize the interference to the radar.

Similarly, the interference channel between the radar transmit array and the $k$th user is represented by $\mathbf{g}_{rk}\in\mathbb{C}^{N_t \times 1}$, $k\in\{1,2,\ldots,K\}$, and the interference channel between the $k$th user and the radar receive array is represented by $\mathbf{g}_{kr}\in\mathbb{C}^{N_r \times 1}$, $k\in\{1,2,\ldots,K\}$. Both $\mathbf{g}_{rk}$ and $\mathbf{g}_{kr}$ are assumed to consist of i.i.d. ZMCSCG entries having a variance $\eta_{rk}$ that is equal to the large scale fading coefficient between the radar and the $k$th user. 
     
    \section{The Channel Estimation Sub-frame}
    \subsection{The Communication Subsystem}
     Considering the effect of radar generated interference, the signal received by BS antennas at the $n$th instant, denoted by, \textcolor{black}{$\mathbf{y}[n] \in \mathcal{C}^{M \times 1}$} can be expressed as,
    \begin{equation}
   \textcolor{black}{ \mathbf{y}[n]= \sum_{k=1}^{K} \sqrt{\beta_k\epsilon_{u,p,k}} \mathbf{h}_{k}  \psi_k[n]+  \mathbf{G}^H_{rb}\mathbf{s}[n]+\sqrt{N_0} \mathbf{w}[n],}	
    \end{equation}
\textcolor{black}{Defining $\mathbf{y}_{l} \triangleq \sum_{n=1}^{K} \mathbf{y}[n] \psi^*_l[n]$, we obtain
    \begin{multline}
     \mathbf{y}_{l}  = 	\sqrt{\beta_l \epsilon_{u,p,l}}	\mathbf
     h_{l} + \sum_{n=1}^{K} \mathbf{\hat{G}}^H_{rb,}  \mathbf{s}[n] \psi^*_l[n] \\+\sum_{n=1}^{K} \mathbf{\tilde{G}}^H_{rb}  \mathbf{s}[n] \psi^*_l[n]
     + \sum_{n=1}^{K} \sqrt{N_0} \mathbf{w}[n] \psi^*_l[n].
\end{multline}
 Now, the BS may or may not have the phase synchronization information for $\mathbf{s}[n]$. In the former case $\mathbf{y}'_{l}$ can be formed by subtracting $\sum_{n=1}^{K} \mathbf{\hat{G}}^H_{rb}  \mathbf{s}[n] \psi^*_l[n]$ from $\mathbf{y}_{l}$ as 
 \begin{equation}
 	\mathbf{y}'_{l}  =\sqrt{\beta_l \epsilon_{u,p,l}}	\mathbf{h}_{l}
 	+ \sum_{n=1}^{K}  \tilde{\mathbf{G}}^H_{rb}\mathbf{s}[n] \psi^*_l[n]+ \sum_{n=1}^{K} \sqrt{N_0} \mathbf{w}[n] \psi^*_l[n].	
 \end{equation}
Clearly, in this case, the LMMSE estimate $\hat{\mathbf{h}}_{l}$ of $\mathbf{h}_{l}$ can be written as $\mathbf{A}'_{l}\mathbf{y}'_{l}$, with $
\mathbf{A}'_{l}  = E[\mathbf{h}_{l} \mathbf{y}'^{H}_{l}] [E[\mathbf{y}'_{l} \mathbf{y}'^{H}_{l}]]^{-1}, 
$
such that $E[\mathbf{h}_{l} \mathbf{y}^{H}_{l}]=E[\mathbf{h}_{l} \mathbf{y}'^{H}_{l}]=\sqrt{\beta_l \epsilon_{u,p,l}} \mathbf{\Sigma}_{l}$, and $E[\mathbf{y}'_{l} \mathbf{y}'^{H}_{l}]=\beta_l \epsilon_{u,p,l} \mathbf{\Sigma}_{l} +(N_t\eta_e\sigma_r^2+N_0) \mathbf{I}_{M}$. Similarly, in case the synchronization information about $\mathbf{s}[n]$ is not available at the BS, $\hat{\mathbf{h}}_{l}=\mathbf{A}_{l}\mathbf{y}_{l}$, such that
\begin{multline}
	\mathbf{A}_{l} =  E[\mathbf{h}_{l} \mathbf{y}^{H}_{l}] [E[\mathbf{y}_{l} \mathbf{y}^{H}_{l}]]^{-1}  \\= \beta_l \epsilon_{u,p,l} \mathbf{\Sigma}_{l} [\beta_l \epsilon_{u,p,l} \mathbf{\Sigma}_{l}+\sigma_r^2 \mathbf{\hat{G}}_{rb} \mathbf{\hat{G}}^H_{rb} +(N_t\eta_e\sigma_r^2+N_0) \mathbf{I}_{M}]^{-1}.
\end{multline}
Now, letting $\tilde{\mathbf{h}}_{l}$ represent the ZMCSCG estimation error orthogonal to $\hat{\mathbf{h}}_{l}$, it is easy to show that $\mathbf{h}_{l}$ can be represented as
\begin{equation}
	\mathbf{h}_{l}=\hat{\mathbf{h}}_{l}+\tilde{\mathbf{h}}_{l},
\end{equation}
with $\hat{\mathbf{h}}_{l}$ and $\tilde{\mathbf{h}}_{l}$ having covariance matrices  $\mathbf{B}_{l}$ and $\bar{\mathbf{B}}_{l}$ respectively. Here, $$\mathbf{B}_{l}=\beta_l \epsilon_{u,p,l} \mathbf{\Sigma}_{l} [\beta_l \epsilon_{u,p,l} \mathbf{\Sigma}_{l} +(N_t\eta_e\sigma_r^2+N_0) \mathbf{I}_{M}]^{-1}  \mathbf{\Sigma}_{l} $$ and $$\bar{\mathbf{B}}_{l}=\mathbf{\Sigma}_{l} - \beta_l \epsilon_{u,p,l} \mathbf{\Sigma}_{l} [\beta_l \epsilon_{u,p,l} \mathbf{\Sigma}_{l} +(N_t\eta_e\sigma_r^2+N_0) \mathbf{I}_{M}]^{-1}  \mathbf{\Sigma}_{l},$$ when the radar signal synchronization information is known at the BS, and
 $$ \mathbf{B}'_{l}=\beta_l \epsilon_{u,p,l} \mathbf{\Sigma}_{l} [\beta_l \epsilon_{u,p,l} \mathbf{\Sigma}_{l} +\sigma_r^2 \mathbf{\hat{G}}_{rb} \mathbf{\hat{G}}^H_{rb}+ \\ 
 (N_t\eta_e\sigma_r^2+N_0) \mathbf{I}_{M}]^{-1}  \mathbf{\Sigma}_{l} $$  and 
 $$ \bar{\mathbf{B}}'_{l}=\mathbf{\Sigma}_{l} - \beta_l \epsilon_{u,p,l} \mathbf{\Sigma}_{l} [\beta_l \epsilon_{u,p,l} \mathbf{\Sigma}_{l}+ \sigma_r^2 \mathbf{\hat{G}}_{rb} \mathbf{\hat{G}}^H_{rb} +(N_t\eta_e\sigma_r^2+N_0) \mathbf{I}_{M}]^{-1}  \mathbf{\Sigma}_{l} ,  $$
 when radar signal information is not known at the BS.}

\subsection{DoA Estimation at the Radar}
The received signal at the radar, in the presence of interference caused due to the pilots transmitted by the users can be expressed as 
\begin{equation}
	\mathbf{z}[n]= h_{rr} \mathbf{A}(\theta) \mathbf{s}[n]+\sum_{k=1}^{K}\sqrt{\epsilon_{u,p,k}}\mathbf{g}_{kr} \psi_k[n]+\sqrt{N}_0 \mathbf{w}_r[n].
\end{equation}
The signal received by the radar can now be reduced to the standard desired signal+noise form, where a standard signal processing technique such as the Multiple Signal Classification~(MUSIC) algorithm can be used to extract the AoA~\cite{6476563}. 
Letting $\mathbf{Z}=[\mathbf{z}[1], \mathbf{z}[2], \ldots, \mathbf{z}[N]]$, we can express it as,
\begin{equation}
	\mathbf{Z}=h_{rr}\mathbf{A}(\theta)\mathbf{S}+\sum_{k=1}^{K}\sqrt{\epsilon_{u,p,k}}\mathbf{g}_{kr}\boldsymbol{\psi}_k[N]+\sqrt{N_0}\mathbf{W}_r,
\end{equation}
with $\mathbf{S}=[\mathbf{s}[1], \mathbf{s}[2], \ldots, \mathbf{s}[N]] \in \mathcal{C}^{N_{t} \times N}$, $\mathbf{W}=[\mathbf{w}[1], \mathbf{w}[2], \ldots, \mathbf{w}[N]] \in \mathcal{C}^{N_{r} \times N}$, and $\boldsymbol{\psi}_k[N]=[\psi_k[1],\ldots,\psi_k[N]] \in \mathcal{C}^{1 \times N}$.
 Consequently, we can write the sample covariance matrix of the received radar signal, $	\hat{\mathbf{R}}_{zz}$, as,
\begin{multline}
	\hat{\mathbf{R}}_{zz}=\mathbf{ZZ}^H=|h_{rr}|^2\mathbf{A}(\theta)\mathbf{SS}^H\mathbf{A}^H(\theta)\\
	+\sum_{k=1}^K\sum_{l=1}^K\sqrt{\epsilon_{u,p,k}}\sqrt{\epsilon_{u,p,l}}\mathbf{g}_{kr}\boldsymbol{\psi}_{k}[N]\boldsymbol{\psi}^H_{l}[N]\mathbf{g}^H_{lr}\\
	+N_0\mathbf{W}_r\mathbf{W}_r^H+2\Re\left\{ h_{rr}\mathbf{A}(\theta)\mathbf{S}\left(\sum_{k=1}^{K}\sqrt{\epsilon_{u,p,k}}\boldsymbol{\psi}^{H}_k[N] \mathbf{g}^{H}_{kr}\right) \right.\\
	 +\left(\sum_{k=1}^{K}\sqrt{\epsilon_{u,p,k}}\mathbf{g}_{kr} \boldsymbol{\psi}_k[N]\right)(\sqrt{N}_{0}\mathbf{W}_{r}^{H})\\\left.+\sqrt{N}_{0} h^{*}_{rr}\mathbf{W}_{r} \mathbf{S}^H\mathbf{A}^H(\theta)\right\},
\end{multline}
 Considering $N=N_t$, we can reduce $\mathbf{SS}^H=\sigma_r^2\mathbf{I}_{N_r}$. Consequently, it is easy to show that the actual covariance matrix of $\mathbf{z}[n]$ takes the form
\begin{equation}
	\mathbf{R}_{zz}=\sigma^2_r |h_{rr}|^{2}\mathbf{A}(\theta)\mathbf{A}^H(\theta)+\left(\sum_{k=1}^{K} \epsilon_{u,p,k} \eta_{rk} +N_{0}\right) \mathbf{I}_{N_{r}}
\end{equation}
Now, $\mathbf{A}(\theta)$ is a rank-1 matrix, and so is $\mathbf{A}(\theta)\mathbf{A}^H(\theta)$, therefore,  $\mathbf{R}_{zz}$ can still be viewed as the sum of a rank-1 matrix and a multiple of the identity matrix. Consequently, the noise subspace of $\mathbf{R}_{zz}$ consists of the eigenvectors corresponding to the $N_r-1$ smallest eigenvalues~(each being equal to $\sum_{k=1}^{K} \epsilon_{u,p,k} \eta_{rk} +N_{0}$). We let the noise subspace of $\mathbf{R}_{zz}$ be represented by the matrix $\mathbf{V}$, and can express the estimate $\hat{\theta}$ of ${\theta}$ as~\cite{6476563},
\begin{equation}
\hat{\theta}=\arg\max_{\phi}\frac{1}{\mathbf{S}^H\mathbf{A}({\phi})^{H} \mathbf{V} \mathbf{V}^{H} \mathbf{A}({\phi})\mathbf{S}} .	
\end{equation}
Since the MUSIC algorithm is intractable for closed form performance analysis, we will evaluate its performance using Monte Carlo simulations in Section~\ref{sec:sims}.

\section{The Uplink Sub-frame}
 In this section, we analyse the performance of the JRC system during the uplink sub-frame. For this purpose, we first evaluate the rates achievable by the communications subsystem via DE analysis~\cite{Papazafeiropoulos2015Deterministic,Zarei2016I/Q}, and then derive the radar rate via the CRLB on the MSE performance of the radar subsystem.

We can write the received signal at the communication BS as 
 \begin{multline}
 \mathbf{y}[n]= \sum_{k=1}^{K} \sqrt{\beta_k\epsilon_{u,s,k}} \mathbf{\hat{h}}_{k} x_k[n]+ \sum_{k=1}^{K} \sqrt{\beta_k\epsilon_{u,s,k}} \mathbf{\tilde{h}}_{k} x_k[n]\\+\hat{\mathbf{G}}_{rb}\mathbf{s}[n]+\tilde{\mathbf{G}}_{rb}\mathbf{s}[n]+\sqrt{N_0} \mathbf{w}_b[n].
 \end{multline} 
We use MMSE combining at the BS, with the matrix $\mathbf{C}\triangleq\mathbf{R}_{yy|\hat{\mathbf{G}}_{rb},\hat{\mathbf{H}}}^{-1}\hat{\mathbf{H}}$ being the combining matrix, such that,
$\mathbf{R}_{yy|\hat{\mathbf{G}}_{rb},\hat{\mathbf{H}}}$ represents the covariance matrix of $\mathbf{y}[n]$ given the availability of the channel estimates $\hat{\mathbf{G}}_{rb}$ and $\hat{\mathbf{H}}$, and can be expressed as,
{\color{black}
\begin{multline}
 	\mathbf{R}_{yy|\hat{\mathbf{G}}_{rb},\hat{\mathbf{H}}}=\sum_{k=1}^{K} \beta_k \epsilon_{u,s,k} \mathbf{\hat{h}}_k\mathbf{\hat{h}}_k^H+\sigma_r^2\hat{\mathbf{G}}_{rb}\hat{\mathbf{G}}_{rb}^H  \\  +\sum_{k=1}^{K} \beta_k \epsilon_{u,s,k}\bar{\mathbf{B}}_{k} +\left(\sigma_r^{2} \eta_{e} +N_{0}\right) \mathbf{I}_{M}.
\end{multline}}
Consequently, the processed signal vector, $\mathbf{r}[n] \in \mathcal{C}^{K \times 1}$, at the BS at the $n${th} instant is given by
$
\mathbf{r}[n]=\mathbf{C}^{H} \mathbf{y}[n]=\hat{\mathbf{H}}^{H} \mathbf{R}_{yy|\hat{\mathbf{G}}_{rb},\hat{\mathbf{H}}}^{-1} \mathbf{y}[n].
$

Now, letting ${r}_{k}[n]$ be the $k^{th}$ component of $\mathbf{r}[n]$, we can write,
\begin{multline}
{r}_{k}[n]=  \sqrt{\beta_k\epsilon_{u,s,k}} \hat{\mathbf{h}}^{H}_{k} \mathbf{R}_{yy|\hat{\mathbf{G}}_{rb},\hat{\mathbf{H}}}^{-1}\mathbf{\hat{h}}_{k} x_k[n]\\+\sum_{\substack{l=1\\l\neq k}}^{K} \sqrt{\beta_l\epsilon_{u,s,l}} \hat{\mathbf{h}}^{H}_{k} \mathbf{R}_{yy|\hat{\mathbf{G}}_{rb},\hat{\mathbf{H}}}^{-1}\mathbf{\hat{h}}_{l} x_l[n]\\+ \sum_{l=1}^{K}   \sqrt{\beta_l\epsilon_{u,s,l}} \hat{\mathbf{h}}^{H}_{k} \mathbf{R}_{yy|\hat{\mathbf{G}}_{rb},\hat{\mathbf{H}}}^{-1}\mathbf{\tilde{h}}_{l} x_l[n] \\+\sum_{i=1}^{N_t}\hat{\mathbf{h}}^{H}_{k} \mathbf{R}_{yy|\hat{\mathbf{G}}_{rb},\hat{\mathbf{H}}}^{-1}\mathbf{\hat{g}}_{rb,i}{s}_i[n]\\ +\hat{\mathbf{h}}^{H}_{k} \mathbf{R}_{yy|\hat{\mathbf{G}}_{rb},\hat{\mathbf{H}}}^{-1}\mathbf{\tilde{G}}_{rb}\mathbf{s}[n]+\sqrt{N_0} \hat{\mathbf{h}}^{H}_{k} \mathbf{R}_{yy|\hat{\mathbf{G}}_{rb},\hat{\mathbf{H}}}^{-1}\mathbf{w}_b[n]. 
\label{eq:r_BS_1}
\end{multline}
Here, the first term corresponds to the desired signal, the second to the cancellable inter user interference, the third to the interference due to the channel estimation errors at the BS, the fourth to the cancellable interference from the radar subsystem, the fifth to the non-cancellable interference from the radar subsystem and the last term to the additive white Gaussian noise.

\begin{theorem}
	\label{thm:uplink_comm}
	The rate achievable by the $k$th user in the uplink of the communication subsystem of a massive MIMO based JRC subsystem can be expressed as 
	\begin{equation}
		R_k=\log_2(1+\gamma_{u,k}),
	\end{equation} 
	where $\gamma_{u,k}$ is the SINR for the $k$th user's signal at the BS, and is given as,
	\begin{equation}
		\gamma_{u,k}=  \frac{\zeta_{s,k}}{\zeta_{I,k}+\zeta_{E,k}+\zeta_{RC,k}+\zeta_{RE,k}+\zeta_{w,k}}.
	\end{equation}
Here $\zeta_{s,k}$ corresponds to the desired signal's power and is given by,
 \begin{equation}
	\zeta_{s,k}= \beta_k\epsilon_{u,s,k} \frac{|\mu_k|^{2}}{|1+\mu_k|^{2}},
	\label{eq:zeta_uxk}
\end{equation}
such that
$\textcolor{black}{\mu_k= \text{Tr}\{\mathbf{T}_{k}(\rho) \mathbf{B}_{k}\}}$, and
{\color{black}
\begin{multline}
	\mathbf{T}_{k}(\rho)=\left(\sum_{\substack{m=1,\\ m\neq k}}^{K} \left( \frac{\beta_m \epsilon_{u,s,m} \mathbf{B}_{m}  }{1+\delta_{k,m}(\rho)}+ \frac{N_{r}\sigma_r^2 (\eta_{I}-\eta_{e}) \mathbf{I}_{M}}{(K-1)(1+\delta_{k,m}(\rho))}  \right) + \right. \ \\ \mathbf{S}+ \rho \mathbf{I}_{M} \Bigg)^{-1},
\end{multline}}
with $\delta_{k,m}(\rho)=\lim_{t \rightarrow \infty} \delta_{k,m}^{(t)}(\rho)$, such that 
{\color{black}
\begin{multline}
	\delta_{k,m}^{(t)}(\rho)= \text{Tr} \Bigg\{\left(\beta_m \epsilon_{u,s,m}\mathbf{B}_{m} + \frac{N_{r} \sigma_r^2 (\eta_{I}-\eta_{e})}{K-1} \mathbf{I}_{M}\right) \\ \times\left(\sum_{l=1, \neq k}^{K} \left(\frac{\beta_l \epsilon_{u,s,l} \mathbf{B}_{l} }{1+\delta_{k,l}^{t-1}(\rho)} + \frac{ N_{r}\sigma_r^2 (\eta_{I}-\eta_{e}) \mathbf{I}_{M}}{(K-1)(1+\delta_{k,l}^{t-1}(\rho))} \right)  + \right.  \\  \mathbf{S} +\rho \mathbf{I}_{M}\Bigg)^{-1} \Bigg\},
\end{multline}}
having initial values $\delta_{k,m}^{(0)}(\rho)=\frac{1}{\rho} \ \forall m  $.

  Similarly, $\zeta_{I,k}$ corresponds to the inter user interference power, and is given as,
  {\color{black}
 \begin{multline}
\zeta_{I,k}=  { \sum_{\substack{l=1, \\ l \neq k}}^{K}} \frac{\beta_l \epsilon_{u,s,l}}{|1+\mu_{k}|^{2}} \Bigg(\mu^{'}_{k,l}+\frac{( \mu^{'}_{k,l})^2}{|1+
	\mu_{k,l}|^{2}}\\- 2 \Re\left\{\frac{( \mu^{'}_{k,l})^{3/2}}{1+ \mu_{k,l}}\right\}\Bigg). 
\label{eq:zeta_IKR}
\end{multline}}
where $\textcolor{black}{\mu_{k,l}=\text{Tr}\{\mathbf{T}_{k,l}(\rho) \mathbf{B}_{l} \}}$, 
{\color{black}
\begin{multline}
\mathbf{T}_{k,l}(\rho)=\left(\sum_{\substack{m=1,\\ m\neq k,l}}^{K} \left( \frac{\beta_m \epsilon_{u,s,m} \mathbf{B}_{m} }{1+\delta_{k,l,m}(\rho)}+ \right.  \right. \\ \left. \left. \frac{ N_{r} \sigma_r^2 (\eta_{I}-\eta_{e}) \mathbf{I}_{M}}{(K-2)(1+\delta_{k,l,m}(\rho))} \right) +  \mathbf{S}+ \rho \mathbf{I}_{M} \right)^{-1},
\label{e6}
\end{multline}}
with $\delta_{k,l,m}(\rho)=\lim_{t \rightarrow \infty} \delta_{k,l,m}^{(t)}(\rho)$, such that
{\color{black}
\begin{multline}
\delta_{k,l,m}^{(t)}(\rho)= \text{Tr} \Bigg \{ \left(\beta_m \epsilon_{u,s,m} \mathbf{B}_{m} +\frac{N_{r}\sigma_r^2 (\eta_{I}-\eta_{e}) }{K-2} \mathbf{I}_{M} \right) \\ \left. \times \left(\sum_{\substack{p=1, \\ p\neq k,l}}^{K} \left(\frac{\beta_p \epsilon_{u,s,p} \mathbf{B}_{p} }{1+\delta_{k,l,p}^{t-1}(\rho)}+  \right. \right. \right.   \\ \left. \left. \left.   \frac{N_{r}\sigma_r^2 (\eta_{I}-\eta_{e}) \mathbf{I}_{M}}{(K-2)(1+\delta_{k,l,p}^{t-1}(\rho))} \right)+\mathbf{S}+ \rho \mathbf{I}_{M}\right)^{-1} \right\},
\end{multline}}
having initial values $\delta_{k,l,m}^{(0)}(\rho)=\frac{1}{\rho} \ \forall m  $,
and  $\textcolor{black}{\mu'_{k,l}=\text{Tr}\{\mathbf{B}_{l}\mathbf{T}^{'}_{k,l}(\rho)\}}$
where $\mathbf{T}^{'}_{k,l}(\rho) \in \mathcal{C}^{M \times M}$ is given by 
{\color{black}
\begin{multline}
\mathbf{T}^{'}_{k,l}(\rho)  = \mathbf{T}_{k,l}(\rho) \mathbf{B}_{k} \mathbf{T}_{k,l}(\rho)  + \mathbf{T}_{k,l}(\rho) \times \\ \sum_{\substack{m=1, \\ m \neq k,l}}^{K} \left( \frac{\beta_m \epsilon_{u,s,m} \mathbf{B}_{m} \delta_{m}^{'}(\rho)}{(1+\delta_{k,l,m}(\rho))^{2}}+ \frac{N_{r} \sigma_r^2 (\eta_{I}-\eta_{e}) \mathbf{I}_{M} \delta_{m}^{'}(\rho)}{(K-2)(1+\delta_{k,l,m}(\rho))^{2}} \right) \times \\ \mathbf{T}_{k,l}(\rho), 
\label{delta_dash} 
\end{multline}} 
and $\boldsymbol{\delta}^{'}(\rho)= [\delta_{1}^{'}(\rho) \ldots\delta_{K}^{'}(\rho)]^{T}$ such that
$
\boldsymbol{\delta}^{'}(\rho)=(\mathbf{I}-\mathbf{J}(\rho))^{-1} \mathbf{v}(\rho),
$
with \textcolor{black}{
$
[\mathbf{J}(\rho)]_{pq}=\frac{1}{(1+\delta_{k,l,p}(\rho))^{2}} \times  \text{Tr}\{(\beta_p \epsilon_{u,s,p} \mathbf{B}_{p} + \frac{N_{r}}{K-2}\sigma_r^2 (\eta_{I}-\eta_{e}) \mathbf{I}_{M}) \mathbf{T}_{k,l}(\rho) \times  (\beta_q \epsilon_{u,s,q} \mathbf{B}_{q} + \frac{N_{r}}{K-2}\sigma_r^2 (\eta_{I}-\eta_{e}) \mathbf{I}_{M}) \mathbf{T}_{k,l}(\rho) \} ,
$ and
$
[\mathbf{v}(\rho)]_{p}=\text{Tr}\{(\beta_p \epsilon_{u,s,p} \mathbf{B}_{p} + \frac{N_{r}}{K-2}\sigma_r^2 (\eta_{I}-\eta_{e}) \mathbf{I}_{M}) \mathbf{T}_{k,l}(\rho) \mathbf{B}_{k} \mathbf{T}_{k,l}(\rho) \}.
$}

The term $\zeta_{E,k}$ corresponds to the interference power due to channel estimation error and is given by
\begin{equation}
\zeta_{E,k}= \sum_{l=1}^{K}   \beta_l\epsilon_{u,s,l} \frac{ \mu^{'}_{k}}{|1+\mu_{k} |^{2}},
\label{eq:zeta_EkF}
\end{equation}
where $\mu^{'}_{k}=\text{Tr}\{\mathbf{T}^{'}_{k}(\rho) \bar{\mathbf{B}}_{l} \} $
and $\mathbf{T}^{'}_{k}(\rho) \in \mathcal{C}^{M \times M}$ is given by
{\color{black} 
\begin{multline}
\mathbf{T}^{'}_{k}(\rho)  = \mathbf{T}_{k}(\rho) \mathbf{B}_k \mathbf{T}_{k}(\rho)  + \mathbf{T}_{k}(\rho) \times \\ \sum_{\substack{m=1,\\ m \neq k}}^{K} \left( \frac{\beta_m \epsilon_{u,s,m} \mathbf{B}_{m} \delta_{m}^{'}(\rho)}{(1+\delta_{k,m}(\rho))^{2}} +\frac{N_{r} \sigma_r^2 (\eta_{I}-\eta_{e}) \mathbf{I}_{M} \delta_{m}^{'}(\rho)} {(K-2)(1+\delta_{k,m}(\rho))^{2}} \right) \times \\ \mathbf{T}_{k}(\rho),  
\label{eq:mu_kdash}
\end{multline} }
with $\mathbf{T}_{k}(\rho) $ and $\boldsymbol{\delta}^{'}(\rho)$ as defined in \eqref{e6} and \eqref{delta_dash} respectively.

$\zeta_{RC,k}$ corresponds to the interference at the $k$th user due to the cancellable component of radar interference and is given by
\begin{multline}
\zeta_{RC,k}=\sum_{i=1}^{N_t} \sigma^{2}_r \frac{1}{|1+ \mu_{k}|^{2}} \Biggl\{\mu^{'}_{k,i}  +  \frac{( \mu^{'}_{k,i})^{3/2}}{|1+\mu_{k,i}|^{2}} \Biggl. \\ \Biggr. -2 \Re \Biggl\{\frac{ \mu^{'}_{k,i}}{1+\mu_{k,i}}\Biggr\}\Biggr\},
\label{eq:zeta_RCkF}
\end{multline}
where $\mu^{'}_{k,i}=\text{Tr}\{\mathbf{T}^{'}_{k,i}(\rho) (\eta_{I}-\eta_{e}) \mathbf{I}_{M} \}  $ , $ \mu_{k,i} = \text{Tr}\{\mathbf{T}_{k,i}(\rho) (\eta_{I}-\eta_{e}) \mathbf{I}_{M} \} $ and $\mathbf{T}^{'}_{k,i}(\rho) \in \mathcal{C}^{M \times M}$ is given by 
{\color{black}
\begin{multline}
\mathbf{T}^{'}_{k,i}(\rho)  = \mathbf{T}_{k,i}(\rho) \mathbf{B}_{k} \mathbf{T}_{k,i}(\rho)  + \mathbf{T}_{k,i}(\rho) \times \\ \sum_{m=1, \neq k}^{K} \left( \frac{\beta_m \epsilon_{u,s,m} \mathbf{B}_{m} \delta_{k,i,m}^{'}(\rho)  }{(1+\delta_{k,i,m}(\rho))^{2}} + \right. \\ \left.    \frac{(N_{r}-1) \sigma_r^2 (\eta_{I}-\eta_{e}) \mathbf{I}_{M} \delta_{k,i,m}^{'}(\rho) }{(K-1)(1+\delta_{k,i,m}(\rho))^{2}}  \right)  \mathbf{T}_{k,i}(\rho),  
\end{multline}} 
and $\mathbf{T}_{k,i}(\rho) $ is given by
{\color{black}
\begin{multline}
\mathbf{T}_{k,i}(\rho)  =\left(\sum_{m=1, \neq k}^{K} \frac{\beta_m \epsilon_{u,s,m}\mathbf{B}_{m}}{1+\delta_{k,i,m}(\rho)} + \right. \\ \left.  \frac{(N_{r}-1) \sigma_r^2 (\eta_{I}-\eta_{e}) \mathbf{I}_{M}}{(K-1)(1+\delta_{k,i,m}(\rho)) }+\mathbf{S}+ \rho \mathbf{I}_{M} \right)^{-1},
\end{multline}}
with $\delta_{k,i,m}(\rho) = \lim_{t \rightarrow \infty} \delta_{k,i,m}^{(t)}(\rho)$ being obtained iteratively from
{\color{black}
\begin{multline}
\delta_{k,i,m}^{(t)}(\rho)= \text{Tr} \left\{ \left( \beta_m \epsilon_{u,s,m} \mathbf{B}_{m}+\frac{(N_{r}-1)\sigma_r^2 (\eta_{I}-\eta_{e}) }{K-1} \mathbf{I}_{M} \right)  \right.\\\left. \times \left(\sum_{\substack{u=1, \\u\neq k}}^{K} \left( \frac{\beta_m \epsilon_{u,s,m} \mathbf{B}_{m}}{1+\delta_{k,i,u}^{t-1}(\rho)}+ \frac{(N_{r}-1) \sigma_r^2 (\eta_{I}-\eta_{e}) \mathbf{I}_{M}}{(K-1)(1+\delta_{k,i,u}^{t-1}(\rho))}  \right)+ \right. \right. \\ \left. \left. \mathbf{S}+ \rho \mathbf{I}_{M} \right)^{-1} \right\},
\end{multline}}
after being initialized as $\delta_{k,i,m}^{(0)}(\rho)=\frac{1}{\rho} \ \forall j  $, and $\boldsymbol{\delta}_{k,i}^{'}(\rho)= [\delta_{k,i,1}^{'}(\rho) \ldots\delta_{k,i,K}^{'}(\rho)]^{T}$ such that
$\boldsymbol{\delta}_{k,i}^{'}(\rho)=(\mathbf{I}_{K}-\mathbf{J}_{k,i}(\rho))^{-1} \mathbf{v}_{k,i}(\rho),
$
with
\textcolor{black}{$
[\mathbf{J}_{k,i}(\rho)]_{pq}=\frac{1}{(1+\delta_{k,i,p}(\rho))^{2}} \times \text{Tr}\{(\beta_p \epsilon_{u,s,p} \mathbf{B}_{p}  + \frac{(N_{r}-1)}{K-1}  \sigma_r^2 (\eta_{I}-\eta_{e}) \mathbf{I}_{M} ) \mathbf{T}_{k,i}(\rho)(\beta_q \epsilon_{u,s,q} \mathbf{B}_{q}  + \frac{(N_{r}-1)}{K-1} \sigma_r^2 (\eta_{I}-\eta_{e}) \mathbf{I}_{M}  )  \mathbf{T}_{k,i}(\rho) \} ,
$ and $
[\mathbf{v}_{k,i}(\rho)]_{p}=\text{Tr}\{(\beta_p \epsilon_{u,s,p}\mathbf{B}_{p} + \frac{(N_{r}-1)}{K-1}  \sigma_r^2 (\eta_{I}-\eta_{e}) \mathbf{I}_{M} ) \mathbf{T}_{k,i}(\rho) \mathbf{B}_{k} \mathbf{T}_{k,i}(\rho) \}.
$}

$\zeta_{RE,k}$ corresponds to the interference at the $k$th user due to error in the estimate of inter-system interference channel between BS and radar and is given by
\begin{equation}
\zeta_{RE,k}= \sum_{i=1}^{N_{t}} \sigma^{2}_r  \frac{b^{2}_{k}\eta_{e}\mu^{'}_{k}}{|1+b^{2}_{k}\mu_{k}|^{2}}.
\label{eq:zeta_rek}
\end{equation}
where  $\mu^{'}_{k}=\text{Tr}\{\mathbf{T}^{'}_{k}(\rho) \}$ and is given by \eqref{eq:mu_kdash}.

$\zeta_{w,k}$ corresponds to the interference at $k$th user due to additive Gaussian noise and is given by
\begin{equation}
\zeta_{w,k}=N_{0}\frac{b^{2}_{k} \mu^{'}_{k}}{|1+b^{2}_{k}\mu_{k}|^{2}}.
\end{equation} 
\end{theorem}
\begin{proof}
	See Appendix~\ref{pr:thm_1}.
\end{proof}
We can observe that the use of MMSE combining, along with the effects of channel hardening offered by the large number of antennas available at the BS effectively cancels the radar generated interference in the uplink. This effect is illustrated in the results obtained in Figs~\ref{f7} and~\ref{f6}. 

We next quantify the performance of the radar subsystem in terms of the achievable radar rate according the notion developed in~\cite{chiriyath2015inner}. We note that the received signal at the radar takes the form 
\begin{equation}
	\mathbf{z}[n]= h_{rr} \mathbf{A}(\theta) \mathbf{s}[n]+\sum_{k=1}^{K}\sqrt{\epsilon_{u,p,k}}\mathbf{g}_{kr} \psi_k[n]+\sqrt{N}_0 \mathbf{w}_r[n].
\end{equation} 
\begin{theorem}
	\label{thm:uplink_radar}
	The radar rate can be expressed as~\cite{chiriyath2015inner},
	\begin{equation}
		R_{\text{radar},u}=\log\left(1+\frac{1}{\text{CRB}(\theta)}\right),
	\end{equation} 
	where $\text{CRB}(\theta)$ corresponds to the Cramer-Rao bound on the MSE of the AoA estimate at the radar, given as,
	\begin{equation}
		CRB(\theta)=\frac{N_0+\sum_{k=1}^{K} \epsilon_{u,p,k} \eta_rk}{2\sigma_r^2 |h_{rr}|^2}\frac{1}{\Re\{ \text{Tr}(\mathbf{\dot{A}}(\theta)\mathbf{\dot{A}}^H(\theta))\}},
	\end{equation}
	with $\mathbf{\dot{A}}(\theta)$ representing the derivative of $\mathbf{A}(\theta)$ wrt $\theta$.
\end{theorem}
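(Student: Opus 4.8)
The rate expression itself requires no derivation here: it is the definition of the radar information rate introduced in~\cite{chiriyath2015inner}, in which the target is modelled as an unwilling source whose parameter $\theta$ is estimated at the radar receiver and the rate is assigned as $\log(1+1/\mathrm{CRB}(\theta))$. Hence the plan is to establish only the stated closed form of $\mathrm{CRB}(\theta)$; the rate formula then follows by direct substitution.

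First I would fix the statistical model for a snapshot $\mathbf{z}[n]=h_{rr}\mathbf{A}(\theta)\mathbf{s}[n]+\sum_{k=1}^{K}\sqrt{\epsilon_{u,p,k}}\mathbf{g}_{kr}\psi_k[n]+\sqrt{N_0}\mathbf{w}_r[n]$. Because the radar knows both its transmit waveform $\mathbf{s}[n]$ and the reflection coefficient $h_{rr}$, the only unknown appearing in the deterministic part is $\theta$, so the observation mean is $\boldsymbol{\mu}[n]=h_{rr}\mathbf{A}(\theta)\mathbf{s}[n]$. The user interference $\mathbf{g}_{kr}$ is ZMCSCG with variance $\eta_{rk}$ and, together with the thermal term, is treated as a zero-mean circularly symmetric Gaussian perturbation. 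Reusing the per-snapshot covariance already computed for $\mathbf{R}_{zz}$, the interference-plus-noise covariance is $\mathbf{R}_I=\sigma_I^2\mathbf{I}_{N_r}$ with $\sigma_I^2=N_0+\sum_{k=1}^{K}\epsilon_{u,p,k}\eta_{rk}$, and it is independent of $\theta$.

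Next I would invoke the Slepian--Bangs formula for a complex Gaussian observation. Since $\mathbf{R}_I$ does not depend on $\theta$, its covariance-derivative term vanishes and only the mean contributes, giving the Fisher information $I(\theta)=\tfrac{2}{\sigma_I^2}\sum_{n=1}^{N}\Re\{\dot{\boldsymbol{\mu}}^H[n]\dot{\boldsymbol{\mu}}[n]\}$, where $\dot{\boldsymbol{\mu}}[n]=h_{rr}\dot{\mathbf{A}}(\theta)\mathbf{s}[n]$. Expanding, $\dot{\boldsymbol{\mu}}^H[n]\dot{\boldsymbol{\mu}}[n]=|h_{rr}|^2\mathbf{s}^H[n]\dot{\mathbf{A}}^H(\theta)\dot{\mathbf{A}}(\theta)\mathbf{s}[n]$, and summing over the $N=N_t$ snapshots, using the cyclic property of the trace together with $\mathbf{S}\mathbf{S}^H=\sigma_r^2\mathbf{I}_{N_t}$, collapses the sum to $\sigma_r^2\,\mathrm{Tr}\{\dot{\mathbf{A}}(\theta)\dot{\mathbf{A}}^H(\theta)\}$. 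This yields $I(\theta)=\tfrac{2\sigma_r^2|h_{rr}|^2}{\sigma_I^2}\Re\{\mathrm{Tr}(\dot{\mathbf{A}}(\theta)\dot{\mathbf{A}}^H(\theta))\}$, and inverting gives $\mathrm{CRB}(\theta)=1/I(\theta)$, which is exactly the claimed expression.

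The main obstacle is the justification of the white-Gaussian interference model. Because each interfering channel $\mathbf{g}_{kr}$ is frozen over the whole pulse rather than redrawn per snapshot, the interference is strictly correlated across snapshots and rank-deficient; what legitimately enters the CRB is the ensemble-averaged covariance $\mathbf{R}_I=\sigma_I^2\mathbf{I}_{N_r}$ already obtained for $\mathbf{R}_{zz}$, and I would need to argue, as is standard for stochastic-interference CRBs, that marginalizing over the Gaussian interference realizations reproduces precisely this covariance so that the mean-only Slepian--Bangs expression applies. A secondary point to make explicit is the assumption that $h_{rr}$ is known, so that no nuisance-parameter block enters the Fisher information; were $h_{rr}$ treated as unknown, one would instead form the full Fisher information matrix and retain the diagonal block corresponding to $\theta$ in its inverse.
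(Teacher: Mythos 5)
Your derivation is correct and follows essentially the same route as the paper: both model the user interference plus noise as white Gaussian with per-snapshot covariance $(N_0+\sum_{k}\epsilon_{u,p,k}\eta_{rk})\mathbf{I}_{N_r}$ and compute the Fisher information of a deterministic-mean Gaussian model, the only cosmetic difference being that the paper first forms the matched-filter sufficient statistics $\mathbf{\Omega}_{n_t}=\sum_n\mathbf{z}[n]s^*_{n_t}[n]$ before taking the derivative of the mean, whereas you apply the Slepian--Bangs formula directly to the raw snapshots. Your remark about the interference being frozen over the pulse (hence not truly i.i.d.\ across snapshots) is a valid caveat that the paper silently assumes away by declaring the columns of $\mathbf{Z}$ i.i.d.
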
 
\begin{proof}
See Appendix~\ref{pr:thm_2}.
\end{proof}
We can observe in the expression for the CRLB that the numerator contains the noise term as well as the interferences from all the communications users, limiting the performance of the radar subsystem. We can obtain the rate regions for the overall JRC system during the uplink communication frame by using Theorems~\ref{thm:uplink_comm} and~\ref{thm:uplink_radar}. We next look at the performance of the JRC system during the downlink communication subframe.
\section{The Downlink Sub-frame}
In this section,  we analyse the performance of the JRC system during the downlink sub-frame. Letting $\bar{p}_k[n]$ denote the data symbol to be transmitted to the $k$th user, we can write the data vector to be transmitted by the massive MIMO BS in the downlink as $\mathbf{\bar{p}}[n]=[\bar{p}_1[n],\bar{p}_2[n],\ldots,p_K[n],0,\ldots,0]\in \mathcal{C}^{(K+N_{r}) \times 1}$. We also let $\bar{\mathbf{H}}=[\mathbf{\hat{H}} \, \mathbf{\hat{G}}_{br}] \in \mathcal{C}^{M \times (K+N_{r})}$ be the horizontal concatenation of the estimates of the communication channel, and the radar interference channel. Using this, we can define the precoder matrix at the BS as,
$
	\mathbf{Q}  =(\bar{\mathbf{H}} \bar{\mathbf{H}}^{H} + \alpha \mathbf{I}_{M})^{-1} \hat{\mathbf{H}}^{*},
$
with $\alpha$ being the regularization parameter. This is equivalent to forming a null in the direction of interference channel to the radar. Consequently, the precoded downlink signal transmitted by the BS is expressed as
$
	\mathbf{p}[n]  =\mathbf{Q}\text{diag} (\boldsymbol{\epsilon}_{d,s}) \bar{\mathbf{p}}[n],
$
with  $({\boldsymbol{\epsilon}_{d,s}})=[\sqrt{\epsilon_{d,s,1}},\ldots,\sqrt{\epsilon_{d,s,K}}]^T$, and $\sqrt{\epsilon_{d,s,k}}$ representing the downlink energy allocated to the $k$th user after power control.
We can now write the signal received at the $k$th user as,
\begin{equation}
r_{k}[n]  =\sqrt{\beta_k}\mathbf{\hat{h}}_{k}^{T}  \mathbf{p}[n]+\sqrt{\beta_k}\mathbf{\tilde{h}}_{k}^{T}  \mathbf{p}[n]+\mathbf{g}_{rk}^{T} \mathbf{s}[n]+w_{k}[n]. \\
\end{equation}
This can also be written as
\begin{multline}
r_{k}[n]=\sqrt{\beta_k \epsilon_{d,s,k}} \hat{\mathbf{h}}_{k}^{T}  (\bar{\mathbf{H}} \bar{\mathbf{H}}^{H} + \alpha \mathbf{I}_{M})^{-1} \hat{\mathbf{h}}_{k}^{*} \bar{p}_{k}[n] \\+ \sum_{\substack{m=1, \\ m\neq k}}^{K}\sqrt{\beta_k \epsilon_{d,s,m}} \hat{\mathbf{h}}_{k}^{T}  (\bar{\mathbf{H}} \bar{\mathbf{H}}^{H} +  \alpha \mathbf{I}_{M})^{-1} \hat{\mathbf{h}}_{m}^{*} \bar{p}_{m}[n] \\ +\sqrt{\beta_k}\mathbf{\tilde{h}}_{k}^{T} (\bar{\mathbf{H}} \bar{\mathbf{H}}^{H} + \alpha \mathbf{I}_{M})^{-1} \hat{\mathbf{H}}^{*} \bar{\mathbf{p}}[n]+\mathbf{g}_{rk}^{T} \mathbf{s}[n]+w_{k}[n].
\end{multline}
where the first term indicates the desired signal corresponding to the $k$th user, the second term is the cancellable inter user interference from the data meant for the other users, the third term corresponds to the interference due to the channel estimation error at the BS, the fourth term is due to the interference from the radar subsystem and the last term is due to AWGN.    
\begin{theorem}
	\label{thm:downlink_comm}
		The rate achievable by the $k$th user in the downlink of the communication subsystem of a JRC system can be expressed as 
	\begin{equation}
	R_{d,k}=\log_2(1+\gamma_{d,k}),
	\end{equation} 
	where $\gamma_{d,k}$ is the SINR for the $k$th user's signal at the BS, and is given as,
	\begin{equation}
	\gamma_{d,k}=  \frac{\zeta_{r,k}}{\zeta_{r,I,k}+\zeta_{r,E,k}+\zeta_{r,RC,k}+\zeta_{r,w,k}}.
	\end{equation}
Here $\zeta_{r,k}$ corresponds to the desired signal power and is given by	
\begin{equation}
 		\zeta_{r,k}=\beta_k \epsilon_{d,s,k} \left|\frac{ \mu_{k,\alpha}}{1+ \mu_{k,\alpha}}\right|^{2} 
 		\label{Zeta_{r,k,t}}
 \end{equation}
 with $\textcolor{black}{\mu_{k,\alpha}=\text{Tr}\{\mathbf{T}_{k}(\alpha) \mathbf{B}_{k}} \}$ such that 
 {\color{black}
 \begin{equation}
 \mathbf{T}_{k}(\alpha)=\left(\sum_{\substack{l=1,\\ l \neq k}}^{K} \frac{\mathbf{B}_{l}}{1+\delta_{k,l}(\alpha)}+\sum_{l=K+1}^{K+N_{r}} \frac{(\eta_{I}-\eta_{e}) \mathbf{I}_{M}}{1+\delta_{k,l}(\alpha)}+ \alpha \mathbf{I}_{M} \right)^{-1},
 \end{equation}}
 and $\delta_{k,l}(\alpha)=\lim_{t \rightarrow \infty} \delta_{k,l}^{(t)}(\alpha)$,  which is iteratively computed as 
 {\color{black}
 \begin{multline}
 \delta_{k,l}^{(t)}(\alpha)= \text{Tr} \Biggl\{ \mathbf{B}_{l} \left(\sum_{\substack{m=1,\\ m\neq k}}^{K} \frac{\mathbf{B}_{m}}{1+\delta_{k,m}^{(t-1)}(\alpha)}\right. \\ \left.  + \sum_{m=K+1}^{K+N_{r}} \frac{(\eta_{I}-\eta_{e}) \mathbf{I}_{M}}{1+\delta_{k,m}^{(t-1)}(\alpha)}+ \alpha \mathbf{I}_{M} \right)^{-1} \Biggr\}  \hspace{2 mm} 1\leq l \leq K \\
 \delta_{k,l}^{(t)}(\alpha)= \text{Tr} \Biggl\{(\eta_{I}-\eta_{e})\mathbf{I}_{M} \left(\sum_{\substack{m=1, \\ m\neq k}}^{K} \frac{\mathbf{B}_{m}}{1+\delta_{k,m}^{(t-1)}(\alpha)} + \right. \\ \left.   \sum_{m=K+1}^{K+N_{r}} \frac{(\eta_{I}-\eta_{e}) \mathbf{I}_{M}}{1+\delta_{k,m}^{(t-1)}(\alpha)}+ \alpha \mathbf{I}_{M} \right)^{-1}\Biggr\}  \hspace{2 mm} K+1\leq l \leq K+N_{r}
 \end{multline}}
 with initial values $\delta_{k,l}^{(0)}(\alpha)=\frac{1}{\alpha} \ \forall \ l$. 
The term $\zeta_{r,I,k}$ corresponds to the inter user interference power, and is given as,  
 \begin{multline}
 		\zeta_{r,I,k}=\sum_{\substack{m=1,  m \neq k}}^{K} \beta_k \epsilon_{d,s,m} \frac{1}{|1+ \mu_{k,\alpha}|^{2}} \times \left\{(  \mu^{'}_{k,m,\alpha})+ \right. \\ \left. \frac{( \mu^{'}_{k,m,\alpha})^{2}}{|1+ \mu_{k,\alpha}|^{2}}-2 \Re \left(\frac{( \mu^{'}_{k,m,\alpha})^{3/2}}{1+ \mu_{k,\alpha}}\right) \right\}
 		\label{zeta_{r,I,k,t}},
 \end{multline}
 where $\textcolor{black}{\mu^{'}_{k,m,\alpha}=\text{Tr}\{\mathbf{T}_{k,m}^{'}(\alpha) \mathbf{B}_{m}\}}$, such that
 {\color{black}
\begin{multline}
\mathbf{T}_{k,m}^{'}(\alpha)=\mathbf{T}_{k,m}(\alpha)    \mathbf{B}_{k} \mathbf{T}_{k,m}(\alpha)+\mathbf{T}_{k,m}(\alpha)  \left( \sum_{\substack{l=1,\\ l\neq k,m}}^{K}  \right. \\ \left. \frac{\mathbf{B}_{l}\delta^{'}_{k,m,l}(\alpha)}{1+\delta_{k,m,l}(\alpha)} + \sum_{l=K+1}^{K+N_{r}} \frac{(1-\eta_{e}) \mathbf{I}_{M}\delta^{'}_{k,m,l}(\alpha)}{1+\delta_{k,m,l}(\alpha)}\right)\mathbf{T}_{k,m}(\alpha)
\end{multline}}
with 
{\color{black}
\begin{multline}
\mathbf{T}_{k,m}(\alpha)=\left(\sum_{\substack{l=1,\\ l\neq k,m}}^{K} \frac{\mathbf{B}_{l}}{1+\delta_{k,m,l}(\alpha)} + \right. \\ \left. \sum_{l=K+1}^{K+N_{r}} \frac{(\eta_{I}-\eta_{e}) \mathbf{I}_{M}}{1+\delta_{k,m,l}(\alpha)}+ \alpha \mathbf{I}_{M} \right)^{-1},
\end{multline}}
and $\delta_{k,m,l}(\alpha)=\lim_{t \rightarrow \infty} \delta_{k,m,l}^{(t)}(\alpha)$,  
{\color{black}
\begin{multline}
\delta_{k,m,l}^{(t)}(\alpha)= \text{Tr} \Biggl\{ \mathbf{B}_{l} \left(\sum_{\substack{p=1,\\ p\neq k,m}}^{K} \frac{\mathbf{B}_{p}}{1+\delta_{k,m,p}^{(t-1)}(\alpha)}\right. \\ \left.  + \sum_{p=K+1}^{K+N_{r}} \frac{(\eta_{I}-\eta_{e}) \mathbf{I}_{M}}{1+\delta_{k,m,p}^{(t-1)}(\alpha)}+ \alpha \mathbf{I}_{M} \right)^{-1} \Biggr\}  \hspace{2 mm} 1\leq l \leq K \\
\delta_{k,m,l}^{(t)}(\alpha)= \text{Tr} \Biggl\{(\eta_{I}-\eta_{e})\mathbf{I}_{M} \left(\sum_{\substack{p=1,\\ p\neq k,m}}^{K} \frac{\mathbf{B}_{p}}{1+\delta_{k,m,p}^{(t-1)}(\alpha)}\right. \\ \left.  + \sum_{p=K+1}^{K+N_{r}} \frac{(\eta_{I}-\eta_{e}) \mathbf{I}_{M}}{1+\delta_{k,m,p}^{(t-1)}(\alpha)}+ \alpha \mathbf{I}_{M} \right)^{-1}\Biggr\} \\ K+1\leq l \leq K+N_{r}
\end{multline}}
with initial values $\delta_{k,m,l}^{(0)}(\alpha)=\frac{1}{\alpha} \ \forall \ l$. 

$\boldsymbol{\delta}^{'}_{k,m}(\alpha)= [\delta_{k,m,1}^{'}(\alpha) \ldots\delta_{k,m,K+N_{r}-2}^{'}(\alpha)]^{T}$ such that
$
\boldsymbol{\delta}^{'}_{k,m}(\alpha)=(\mathbf{I}_{K+N_{r}-2}-\mathbf{J}(\rho))^{-1} \mathbf{v}(\rho),
$
$\textcolor{black}{[\mathbf{J}(\alpha)]_{pq}=\frac{\text{Tr}\{\mathbf{B}_{p} \mathbf{T}_{k,m}(\alpha) \mathbf{B}_{q} \mathbf{T}_{k,m}(\alpha) \} }{(1+\delta_{k,m,p}(\alpha))^{2}},}
$ and
$\textcolor{black}{
[\mathbf{v}(\alpha)]_{p}=\text{Tr}\{\mathbf{B}_{p} \mathbf{T}_{k,m}(\alpha) \mathbf{T}_{k,m}(\alpha) \},}
$
when $p,q =1,2,\ldots,K, \neq k,m$  and

 \textcolor{black}{$[\mathbf{J}(\alpha)]_{pq}=\frac{\text{Tr}\{(\eta_{I}-\eta_{e}) \mathbf{I}_{M} \mathbf{T}_{k,m}(\alpha) (\eta_{I}-\eta_{e}) \mathbf{I}_{M} \mathbf{T}_{k,m}(\alpha) \} }{(1+\delta_{k,m,p}(\alpha))^{2}},
	$ and
	$
	[\mathbf{v}(\alpha)]_{p}=\text{Tr}\{(\eta_{I}-\eta_{e}) \mathbf{I}_{M} \mathbf{T}_{k,m}(\alpha) \mathbf{T}_{k,m}(\alpha) \}.
	$
	when $p,q =K+1,\ldots,K+N_{r} $}.

The term $\zeta_{r,E,k}$ corresponds to the interference power due to channel estimation error and is given by
\begin{equation}
\zeta_{r,E,k}=\sum_{l=1 }^{K} \beta_k \epsilon_{d,s,l}  \frac{ \mu^{'}_{l,\alpha}}{|1+\mu_{l,\alpha} |^{2}}
\label{zeta_{r,E,k,t}},
\end{equation}
where $\textcolor{black}{\mu_{l,\alpha}=\text{Tr}\{\mathbf{T}_{l}(\alpha) \mathbf{B}_{l}}\}$ and $ \textcolor{black}{\mu^{'}_{l,\alpha}=\text{Tr}\{\mathbf{T}^{'}_{l}(\alpha)\mathbf{B}_{l}}\}$
such that
{\color{black}
\begin{multline}
\mathbf{T}_{l}^{'}(\alpha)=\mathbf{T}_{l}(\alpha) \bar{\mathbf{B}}_{k} \mathbf{T}_{l}(\alpha)+\mathbf{T}_{l}(\alpha) \left( \sum_{d=1, \neq l}^{K} \frac{\mathbf{B}_{d} \delta^{'}_{l,d}(\alpha)}{1+\delta_{l,d}(\alpha)}\right. \\ \left. +\sum_{d=K+1}^{K+N_{r}} \frac{(\eta_{I}-\eta_{e}) \mathbf{I}_{M}\delta^{'}_{l,d}(\alpha)}{1+\delta_{l,d}(\alpha)}\right)\mathbf{T}_{l}(\alpha),
\end{multline}}
and
{\color{black}
\begin{equation}
\mathbf{T}_{l}(\alpha)=\left(\sum_{\substack{d=1,\\ d \neq l}}^{K} \frac{\mathbf{B}_{d}}{1+\delta_{l,d}(\alpha)}+\sum_{d=K+1}^{K+N_{r}} \frac{(\eta_{I}-\eta_{e}) \mathbf{I}_{M}}{1+\delta_{l,d}(\alpha)}+ \alpha \mathbf{I}_{M} \right)^{-1},
\end{equation}}
where $\delta_{l,d}(\alpha)=\lim_{t \rightarrow \infty} \delta_{l,d}^{(t)}(\alpha)$, 
{\color{black}  
\begin{multline}
\delta_{l,d}^{(t)}(\alpha)= \text{Tr} \Biggl\{ \mathbf{B}_{d} \left(\sum_{\substack{m=1, \\ m \neq l}}^{K} \frac{\mathbf{B}_{m}}{1+\delta_{l,m}^{(t-1)}(\alpha)}\right. \\ \left.  + \sum_{m=K+1}^{K+N_{r}} \frac{(\eta_{I}-\eta_{e}) \mathbf{I}_{M}}{1+\delta_{l,m}^{(t-1)}(\alpha)}+ \alpha \mathbf{I}_{M} \right)^{-1} \Biggr\}  \hspace{2 mm} 1\leq d \leq K \\
\delta_{l,d}^{(t)}(\alpha)= \text{Tr} \Biggl\{(\eta_{I}-\eta_{e})\mathbf{I}_{M} \left(\sum_{\substack{m=1, \\ m \neq l}}^{K} \frac{\mathbf{B}_{m}}{1+\delta_{l,m}^{(t-1)}(\alpha)} + \right. \\ \left.   \sum_{m=K+1}^{K+N_{r}} \frac{(\eta_{I}-\eta_{e}) \mathbf{I}_{M}}{1+\delta_{l,m}^{(t-1)}(\alpha)}+ \alpha \mathbf{I}_{M} \right)^{-1}\Biggr\}  \hspace{5 mm} K+1\leq d \leq K+N_{r}
\end{multline}}
with initial values $\delta_{l,d}^{(0)}(\alpha)=\frac{1}{\alpha} \ \forall \ l$, and 
$
\boldsymbol{\delta}^{'}_{l,d}(\alpha)=(\mathbf{I}_{K+N_{r}-1}-\mathbf{J}(\alpha))^{-1} \mathbf{v}(\alpha),
$

$[\mathbf{J}(\alpha)]_{pq}=\frac{\text{Tr}\{\mathbf{B}_{p} \mathbf{T}_{l}(\alpha) \mathbf{B}_{q}  \mathbf{T}_{l}(\alpha) \} }{(1+\delta_{l,p}(\alpha))^{2}},
$
$
[\mathbf{v}(\alpha)]_{p}=\text{Tr}\{\mathbf{B}_{p}  \mathbf{T}_{k,m}(\alpha) \mathbf{T}_{k,m}(\alpha) \},
$
when $p,q =1,2,\ldots,K, \neq l$ and $[\mathbf{J}(\alpha)]_{pq}=\frac{\text{Tr}\{(\eta_{I}-\eta_{e}) \mathbf{I}_{M} \mathbf{T}_{l}(\alpha) (\eta_{I}-\eta_{e}) \mathbf{I}_{M} \mathbf{T}_{l}(\alpha) \} }{(1+\delta_{l,p}(\alpha))^{2}},
$
$
[\mathbf{v}(\alpha)]_{p}=\text{Tr}\{(\eta_{I}-\eta_{e}) \mathbf{I}_{M}  \mathbf{T}_{k,m}(\alpha) \bar{\mathbf{B}}_{k} \mathbf{T}_{k,m}(\alpha) \},
$
when $p,q =K+1,\ldots,K+N_{r}$ .

 $\zeta_{r,RC,k}$ corresponds to the interference power due to the presence of radar subsystem is,
\begin{equation}
\zeta_{r,RC,k}=\sigma_r^{2} \eta_{r,k}
\end{equation}
$\zeta_{r,W,k}$ corresponds to the interference due to the presence of AWGN and is given by 
\begin{equation}
\zeta_{r,W,k}=N_{0}.
\end{equation}
 \end{theorem}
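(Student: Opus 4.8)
The plan is to mirror the deterministic-equivalent (DE) argument behind Theorem~\ref{thm:uplink_comm}, now using the regularized zero-forcing resolvent $\mathbf{W}\triangleq(\bar{\mathbf{H}}\bar{\mathbf{H}}^{H}+\alpha\mathbf{I}_M)^{-1}$ in place of the MMSE combiner. First I would read off the five terms of the decomposition of $r_k[n]$---useful signal, inter-user interference, estimation-error interference, radar interference, and noise---and write $\gamma_{d,k}$ as the ratio of the DE of the useful-signal power to the sum of the DEs of the remaining powers, each evaluated in expectation over the small-scale fading. Because $\bar{\mathbf{H}}=[\hat{\mathbf{H}}\ \hat{\mathbf{G}}_{br}]$, the Gram matrix $\bar{\mathbf{H}}\bar{\mathbf{H}}^{H}$ is a sum of $K$ rank-one terms $\hat{\mathbf{h}}_l^{*}\hat{\mathbf{h}}_l^{T}$ of per-entry variance $b_l^{2}$ and $N_r$ rank-one terms from the estimated radar channel of per-entry variance $\eta_I-\eta_e$; this two-block structure is precisely what generates the paired sums over $1\le l\le K$ and over $K+1\le l\le K+N_r$ inside $\mathbf{T}_k(\alpha)$ and inside every fixed-point recursion for $\delta_{k,l}^{(t)}(\alpha)$.

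For the useful-signal power I would isolate $\hat{\mathbf{h}}_k$ from $\mathbf{W}$ by the Sherman--Morrison identity, $\hat{\mathbf{h}}_k^{T}\mathbf{W}^{-1}\hat{\mathbf{h}}_k^{*}=\hat{\mathbf{h}}_k^{T}\mathbf{W}_k^{-1}\hat{\mathbf{h}}_k^{*}/(1+\hat{\mathbf{h}}_k^{T}\mathbf{W}_k^{-1}\hat{\mathbf{h}}_k^{*})$, where $\mathbf{W}_k$ deletes the rank-one contribution of user $k$. The trace lemma replaces the quadratic form by its mean $b_k^{2}\,\text{Tr}(\mathbf{W}_k^{-1})$, and the rank-one perturbation lemma lets me pass to the full resolvent trace, whose DE is $\mu_{k,\alpha}=\text{Tr}\{\mathbf{T}_k(\alpha)\}$ with $\mathbf{T}_k(\alpha)$ the Stieltjes-transform fixed point defined by the $\delta_{k,l}(\alpha)$ iteration. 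Squaring the resulting gain $b_k^{2}\mu_{k,\alpha}/(1+b_k^{2}\mu_{k,\alpha})$ and scaling by $\beta_k\epsilon_{d,s,k}$ gives $\zeta_{r,k}$.

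The inter-user and estimation-error powers are second-order quantities. For the former I would compute $\E|\hat{\mathbf{h}}_k^{T}\mathbf{W}^{-1}\hat{\mathbf{h}}_m^{*}|^{2}$ by deleting both user $k$ and user $m$, so that the reduced resolvent $\mathbf{T}_{k,m}(\alpha)$ is independent of both channels, and then expanding the two Sherman--Morrison denominators; the variance of a bilinear form in the resolvent is controlled by the derivative of the Stieltjes transform, namely $\mu'_{k,m,\alpha}=\text{Tr}\{\mathbf{T}'_{k,m}(\alpha)\}$, whose auxiliary vector $\boldsymbol{\delta}'_{k,m}(\alpha)=(\mathbf{I}-\mathbf{J})^{-1}\mathbf{v}$ comes from differentiating the coupled fixed-point system for $\delta_{k,m,l}(\alpha)$. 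Collecting the leading, cross, and square contributions of that expansion reproduces the three-term bracket of $\zeta_{r,I,k}$. The estimation-error term is handled the same way but carries the factor $\bar{b}_k^{2}$, since the orthogonal error $\tilde{\mathbf{h}}_k$ is independent of the precoder and enters only through its variance, leaving a single-removal squared resolvent $\mathbf{T}'_l(\alpha)$ summed against the downlink powers $\epsilon_{d,s,l}$, which yields $\zeta_{r,E,k}$.

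The remaining two terms need no DE: the radar-to-user channel $\mathbf{g}_{rk}$ is statistically independent of everything the BS precodes against, so $\E|\mathbf{g}_{rk}^{T}\mathbf{s}[n]|^{2}$ follows directly from $\mathbf{R}_{ss}=\sigma_r^{2}\mathbf{I}_{N_t}$ and the per-entry variance $\eta_{r,k}$ to give $\zeta_{r,RC,k}=\sigma_r^{2}\eta_{r,k}$, while $\zeta_{r,w,k}=N_0$ is simply the variance of $w_k[n]$. I expect the main obstacle to be the bookkeeping of the second-order DE: correctly differentiating the coupled fixed-point equations to obtain the $\boldsymbol{\delta}'$ linear system and the $\mathbf{T}'$ matrices, and propagating the two-block communication/radar weighting consistently through the double rank-one removal, since any slip there corrupts every cross term in $\zeta_{r,I,k}$.
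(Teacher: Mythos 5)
Your proposal is correct and follows essentially the same route as the paper's own proof: Sherman--Morrison removal of $\hat{\mathbf{h}}_k$ (and of the pair $k,m$ for the cross terms) from the RZF resolvent, the trace and rank-one perturbation lemmas, and the deterministic equivalents $\mathbf{T}_k(\alpha)$, $\mathbf{T}'_{k,m}(\alpha)$ arising from the two-block structure of $\bar{\mathbf{H}}=[\hat{\mathbf{H}}\ \hat{\mathbf{G}}_{br}]$, with the radar and noise terms handled by direct variance computation. The only blemish is notational: having defined $\mathbf{W}$ as the inverse $(\bar{\mathbf{H}}\bar{\mathbf{H}}^{H}+\alpha\mathbf{I}_M)^{-1}$, you should write $\hat{\mathbf{h}}_k^{T}\mathbf{W}\hat{\mathbf{h}}_k^{*}$ rather than $\hat{\mathbf{h}}_k^{T}\mathbf{W}^{-1}\hat{\mathbf{h}}_k^{*}$.
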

\begin{proof}
	See~Appendix~\ref{pr:thm_3}.
\end{proof}

Here it is important to note that the RZF beamforming by the BS only nulls the interference from the BS to the radar and does not affect the interference from the radar to the users. The radar to user interference, denoted by $\zeta_{r,RC,k}$ in the SINR expression remains unmitigated. 

We next analyse the performance of radar subsystem in the downlink subframe. We note that the received signal at the radar is given by 
\begin{multline}
	\mathbf{z}[n]= h_{rr} \mathbf{A}(\theta) \mathbf{s}[n]+\mathbf{\hat{G}}_{br}^{H}\mathbf{Q}\text{diag} (\sqrt{\boldsymbol{\epsilon}_{d,s}}) \bar{\mathbf{p}}[n]\\+\mathbf{\tilde{G}}_{br}^{H}\mathbf{Q}\text{diag} (\sqrt{\boldsymbol{\epsilon}_{d,s}}) \bar{\mathbf{p}}[n] +\sqrt{N}_0 \mathbf{w}_r[n].
\end{multline}
\begin{theorem} \label{thm:downlink_radar}
The radar rate can be expressed as,
\begin{equation}
	R_{\text{radar},d}=\log\left(1+\frac{1}{\text{CRB}(\theta)}\right),
\end{equation} 
where 
\begin{equation}
\text{CRB}(\theta)=\frac{	\sigma_{wr,d}^2}{2\sigma_r^2 |h_{rr}|^2}\frac{1}{\Re\{ \text{Tr}(\mathbf{\dot{A}}(\theta)\mathbf{\dot{A}}^H(\theta))\}},
\end{equation}
	with
\begin{equation}
\sigma_{wr,d}^2 =N_0+\frac{ \mu^{'}_{i,\alpha}}{|1+ \mu_{\mathbf{\hat{g}}_{br,m}}|^{2}}+ \mu^{'}_{\alpha}
\end{equation}
such that $\textcolor{black}{\mu^{'}_{\alpha}=\text{Tr}\left\{\mathbf{T}^{'}(\alpha) \left( \sum_{i=1}^{K}\epsilon_{d,s,i} \mathbf{B}_{i} \right)\right\}}$,
{\color{black}
\begin{multline}
	\mathbf{T}^{'}(\alpha)=\mathbf{T}(\alpha)(\eta_{e} \mathbf{I}_{M}) \mathbf{T}(\alpha)+\mathbf{T}(\alpha) \left( \sum_{l=1}^{K} \frac{\mathbf{B}_{l}\delta^{'}_{l}(\alpha)}{1+\delta_{l}(\alpha)}\right. \\ \left. +\sum_{l=K+1}^{K+N_{r}} \frac{(\eta_{I}-\eta_{e}) \mathbf{I}_{M}\delta^{'}_{l}(\alpha)}{1+\delta_{l}(\alpha)}\right)\mathbf{T}(\alpha)
\end{multline}}
with 
{\color{black}
\begin{equation}
	\mathbf{T}(\alpha)=\left(\sum_{l=1}^{K} \frac{\mathbf{B}_{l}}{1+\delta_{l}(\alpha)}+\sum_{l=K+1}^{K+N_{r}} \frac{(\eta_{I}-\eta_{e}) \mathbf{I}_{M}}{1+\delta_{l}(\alpha)}+ \alpha \mathbf{I}_{M} \right)^{-1},
\end{equation}}
and $\delta_{l}(\alpha)=\lim_{t \rightarrow \infty} \delta_{l}^{(t)}(\alpha)$,  
{\color{black}
\begin{multline}
	\delta_{l}^{(t)}(\alpha)= \text{Tr} \Biggl\{ \mathbf{B}_{l} \left(\sum_{p=1}^{K} \frac{\mathbf{B}_{p}}{1+\delta_{p}^{(t-1)}(\alpha)}\right. \\ \left.  + \sum_{p=K+1}^{K+N_{r}} \frac{(\eta_{I}-\eta_{e}) \mathbf{I}_{M}}{1+\delta_{p}^{(t-1)}(\alpha)}+ \alpha \mathbf{I}_{M} \right)^{-1} \Biggr\}  \hspace{2 mm} 1\leq l \leq K \\
	\delta_{l}^{(t)}(\alpha)= \text{Tr} \Biggl\{(\eta_{I}-\eta_{e})\mathbf{I}_{M} \left(\sum_{\substack{p=1,\\ p \neq k,m}}^{K} \frac{\mathbf{B}_{p}}{1+\delta_{p}^{(t-1)}(\alpha)}\right. \\ \left.  + \sum_{m=K+1}^{K+N_{r}} \frac{(\eta_{I}-\eta_{e}) \mathbf{I}_{M}}{1+\delta_{p}^{(t-1)}(\alpha)}+ \alpha \mathbf{I}_{M} \right)^{-1}\Biggr\}  \hspace{5 mm} K+1\leq l \leq K+N_{r}
\end{multline}}
such that $\delta_{l}^{(0)}(\alpha)=\frac{1}{\alpha} \ \forall \ l$,
$\boldsymbol{\delta}^{'}(\alpha)= [\delta_{1}^{'}(\alpha) \ldots\delta_{K+N_{r}-2}^{'}(\alpha)]^{T}$,
$
	\boldsymbol{\delta}^{'}(\alpha)=(\mathbf{I}_{K+N_{r}}-\mathbf{J}(\rho))^{-1} \mathbf{v}(\rho),
$
with
$
	[\mathbf{J}(\alpha)]_{pq}=\frac{\text{Tr}\{\mathbf{B}_{p} \mathbf{T}(\alpha) \mathbf{B}_{q} \mathbf{T}(\alpha) \} }{(1+\delta_{p}(\alpha))^{2}},
	[\mathbf{v}(\alpha)]_{p}=\text{Tr}\{\mathbf{B}_{p} \mathbf{T}(\alpha) \eta_{e} \mathbf{I}_{M} \mathbf{T}(\alpha) \},
$
when $p,q =1,2,\ldots,K$ and $
[\mathbf{J}(\alpha)]_{pq}=\frac{\text{Tr}\{(\eta_{I}-\eta_{e}) \mathbf{I}_{M} \mathbf{T}(\alpha)(\eta_{I}-\eta_{e}) \mathbf{I}_{M}\mathbf{T}(\alpha) \} }{(1+\delta_{p}(\alpha))^{2}},
[\mathbf{v}(\alpha)]_{p}=\text{Tr}\{(\eta_{I}-\eta_{e}) \mathbf{I}_{M} \mathbf{T}(\alpha) \eta_{e} \mathbf{I}_{M} \mathbf{T}(\alpha) \},
$ when $p,q =K+1,\ldots,K+N_{r}$.
Also,
{\color{black}
\begin{equation}
	\mu^{'}_{i,\alpha}=\text{Tr}\left\{\mathbf{T}^{'}_{i}(\alpha) \left(\sum_{i=1}^{K}\epsilon_{d,s,i} \mathbf{B}_{i}\right)\right\}
\end{equation}}
where,
{\color{black}
\begin{multline}
\mathbf{T}^{'}_{i}(\alpha)=\mathbf{T}_{\mathbf{\hat{g}}_{br,m}}(\alpha)(\eta_{I}-\eta_{e}) \mathbf{I}_{M} \mathbf{T}_{\mathbf{\hat{g}}_{br,m}}(\alpha) + \mathbf{T}_{\mathbf{\hat{g}}_{br,m}}(\alpha) \times\\ \left( \sum_{l=1}^{K} \frac{\mathbf{B}_{l}\delta^{'}_{l}(\alpha)}{1+\delta_{l}(\alpha)}\right.  \left. +\sum_{l=K+1}^{K+N_{r}} \frac{(\eta_{I}-\eta_{e}) \mathbf{I}_{M}\delta^{'}_{l}(\alpha)}{1+\delta_{l}(\alpha)}\right)\mathbf{T}_{\mathbf{\hat{g}}_{br,m}}(\alpha)
\end{multline}}
with 
{\color{black}
\begin{multline}
\mathbf{T}_{\mathbf{\hat{g}}_{br,m}}(\alpha)\\=\left(\sum_{l=1}^{K} \frac{\mathbf{B}_{l}}{1+\delta_{l}(\alpha)}+\sum_{\substack{l=K+1, \\ l \neq m}}^{K+N_{r}} \frac{(\eta_{I}-\eta_{e}) \mathbf{I}_{M}}{1+\delta_{l}(\alpha)}+ \alpha \mathbf{I}_{M} \right)^{-1},
\end{multline}}
and
$\mu_{\mathbf{\hat{g}}_{br,m}}=\text{Tr}\{\mathbf{T}_{\mathbf{\hat{g}}_{br,m}}(\alpha) (\eta_{I}-\eta_{e}) \mathbf{I}_{M}\}.$
\end{theorem}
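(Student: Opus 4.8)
The plan is to mirror the two-part structure used to establish Theorem~\ref{thm:uplink_radar}, adapting it to the downlink interference environment. First I would fix the Cram\'er--Rao bound for the single-target AoA problem, treating everything except $h_{rr}\mathbf{A}(\theta)\mathbf{s}[n]$ as an effective noise with per-antenna power $\sigma_{wr,d}^2$; then the real work is to evaluate $\sigma_{wr,d}^2$ itself via deterministic-equivalent analysis. In the downlink observation the radar sees the desired echo plus three nuisance contributions: the AWGN $\sqrt{N_0}\mathbf{w}_r[n]$, the residual BS interference through the estimated channel $\hat{\mathbf{G}}_{br}^H\mathbf{Q}\,\text{diag}(\sqrt{\boldsymbol{\epsilon}_{d,s}})\bar{\mathbf{p}}[n]$, and the interference through the estimation error $\tilde{\mathbf{G}}_{br}^H\mathbf{Q}\,\text{diag}(\sqrt{\boldsymbol{\epsilon}_{d,s}})\bar{\mathbf{p}}[n]$, all of which are treated as noise by the radar.

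For the CRB itself, I would invoke channel hardening in the large-$M$ regime to argue that the aggregate interference-plus-noise covariance concentrates on a scaled identity $\sigma_{wr,d}^2\mathbf{I}_{N_r}$. Under this spatially-white model the Fisher information for $\theta$ is computed exactly as in Appendix~\ref{pr:thm_2}: differentiating $\mathbf{A}(\theta)$ and taking the appropriate inner product produces the factor $\Re\{\text{Tr}(\dot{\mathbf{A}}(\theta)\dot{\mathbf{A}}^H(\theta))\}$, so that $\text{CRB}(\theta)=\sigma_{wr,d}^2/(2\sigma_r^2|h_{rr}|^2\,\Re\{\text{Tr}(\dot{\mathbf{A}}(\theta)\dot{\mathbf{A}}^H(\theta))\})$ and the radar rate follows from the definition in~\cite{chiriyath2015inner}. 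This part is essentially identical to the uplink case, with $\sigma_{wr,d}^2$ replacing the uplink effective noise power $N_0+\sum_k\epsilon_{u,p,k}\eta_{rk}$; the only additional justification needed is the concentration claim, which I would support by the same arguments underlying Theorem~\ref{thm:downlink_comm}.

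The substantive step is computing $\sigma_{wr,d}^2$. Because $\hat{\mathbf{G}}_{br}$, $\tilde{\mathbf{G}}_{br}$, $\mathbf{w}_r$ and the downlink data $\bar{\mathbf{p}}[n]$ are mutually independent and the error is orthogonal to the estimate, every cross term vanishes, leaving $N_0\mathbf{I}_{N_r}$ together with the two interference covariances $E[\hat{\mathbf{G}}_{br}^H\mathbf{Q}\,\text{diag}(\boldsymbol{\epsilon}_{d,s})\mathbf{Q}^H\hat{\mathbf{G}}_{br}]$ and $E[\tilde{\mathbf{G}}_{br}^H\mathbf{Q}\,\text{diag}(\boldsymbol{\epsilon}_{d,s})\mathbf{Q}^H\tilde{\mathbf{G}}_{br}]$. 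The error-channel term is the easy one: since $\tilde{\mathbf{G}}_{br}$ is independent of $\mathbf{Q}$, it collapses to $\eta_e\,\text{Tr}(\mathbf{Q}\,\text{diag}(\boldsymbol{\epsilon}_{d,s})\mathbf{Q}^H)\,\mathbf{I}_{N_r}$, and taking the deterministic equivalent of this normalized trace yields the term $(\sum_{i=1}^K\epsilon_{d,s,i}b_i^2)\mu'_\alpha$. Here the leading block $\mathbf{T}(\alpha)(\eta_e\mathbf{I}_M)\mathbf{T}(\alpha)$ in $\mathbf{T}'(\alpha)$ carries the error variance, while $\hat{\mathbf{H}}^*\,\text{diag}(\boldsymbol{\epsilon}_{d,s})\hat{\mathbf{H}}^T$ supplies the factor $\sum_i\epsilon_{d,s,i}b_i^2$; the quantities $\mathbf{T}(\alpha),\delta_l(\alpha),\boldsymbol{\delta}'(\alpha)$ are exactly the deterministic equivalents already introduced for the downlink SINR in Theorem~\ref{thm:downlink_comm}, so the RMT machinery of~\cite{RMT} applies verbatim.

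The hard part will be the estimated-channel term, because $\hat{\mathbf{G}}_{br}$ appears simultaneously as the interference direction and inside the resolvent $(\bar{\mathbf{H}}\bar{\mathbf{H}}^H+\alpha\mathbf{I}_M)^{-1}$ that defines the RZF precoder $\mathbf{Q}$, making numerator and denominator statistically coupled. My plan is to isolate the $m$th column of $\hat{\mathbf{G}}_{br}$ from $\bar{\mathbf{H}}\bar{\mathbf{H}}^H$ using the Sherman--Morrison identity, so that the resolvent with that column excluded is precisely $\mathbf{T}_{\hat{\mathbf{g}}_{br,m}}(\alpha)$; the rank-one correction then generates the factor $1+(\eta_I-\eta_e)\mu_{\hat{\mathbf{g}}_{br,m}}$, whose squared magnitude is the signature of this self-term removal appearing in the denominator. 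Applying the trace and rank-one-perturbation lemmas from random matrix theory to the remaining quadratic form $\mathbf{T}_{\hat{\mathbf{g}}_{br,m}}(\alpha)\,\hat{\mathbf{H}}^*\,\text{diag}(\boldsymbol{\epsilon}_{d,s})\hat{\mathbf{H}}^T\,\mathbf{T}_{\hat{\mathbf{g}}_{br,m}}(\alpha)$ produces $(\sum_{i=1}^K\epsilon_{d,s,i}b_i^2)\mu'_{i,\alpha}$ in the numerator, where the $(\eta_I-\eta_e)$ in the leading block of $\mathbf{T}'_i(\alpha)$ encodes the variance of the isolated estimated column. The crux is showing that the residual cross-fluctuations between the isolated column and the excluded resolvent are asymptotically negligible, yielding exactly the factored form $(\sum_i\epsilon_{d,s,i}b_i^2)\mu'_{i,\alpha}/|1+(\eta_I-\eta_e)\mu_{\hat{\mathbf{g}}_{br,m}}|^2$; this mirrors, but is more delicate than, the residual-interference computation already carried out in the proof of Theorem~\ref{thm:downlink_comm}. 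Summing the three contributions then gives the claimed $\sigma_{wr,d}^2$ and completes the proof.
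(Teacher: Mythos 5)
Your proposal matches the paper's own proof in both structure and substance: the paper likewise splits the radar's received interference into the AWGN, the estimated-channel term (handled by excluding the column $\hat{\mathbf{g}}_{br,m}$ from the RZF resolvent via the matrix inversion lemma, yielding the $|1+(\eta_I-\eta_e)\mu_{\hat{\mathbf{g}}_{br,m}}|^{2}$ denominator and the $\mu'_{i,\alpha}$ numerator) and the error-channel term (handled by independence of $\tilde{\mathbf{G}}_{br}$ from $\mathbf{Q}$, yielding $\mu'_{\alpha}$), sums them into $\sigma_{wr,d}^2\mathbf{I}_{N_r}$, and then repeats the sufficient-statistic/Fisher-information computation of the uplink case to obtain the CRB. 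This is essentially the same argument; no gaps to report.
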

\begin{proof}
	See Appendix~\ref{pr:thm_4}.
\end{proof}
Again, the rate regions quantifying the performance of the proposed JRC system can be obtained by using Theorems~\ref{thm:downlink_comm} and~\ref{thm:downlink_radar} in conjunction. In the next section, we present simulation results to better visualize the ideas presented by these theorems. 
\section{Simulation Results} \label{sec:sims}
\textcolor{black}{In this section we validate our derived results using Monte-Carlo simulations and prescribe parameter values for optimized system operation. 
 Here, the communication subsystem consists of single cell massive MIMO system having $M$=128 antennas (with an inter antenna spacing equal to half the carrier wavelength) at a BS that is located at the cell centre serving $K $ = 8 users distributed uniformly across the cell, and operating at a carrier frequency $f_c$ = 3~GHz. The complex basedband equivalent communications and radar signals are assumed to have a bandwidth of $20$~MHz. The channel covariance matrix is $\boldsymbol{\Sigma}_k$ is assumed to be the identity matrix $(\mathbf{I}_M)$. For simplicity, we assume the cell to be circular, and having a radius 100~m. The communication frame consists of $1024$ channel uses with the first $K=8$ channel uses dedicated for training, and the remaining divided equally for uplink and downlink data transmission. The performance of the communication subsystem, unless stated otherwise is measured in terms of the average per user achievable rate. For the purpose of these experiments, we consider the interference channels to be known at the BS with a $10\%$ error (i.e. $\eta_{e}= 0.1$) and the variance of AWGN to be unity~(i.e. $N_0= 1$). The large scale fading path loss in wireless channels is modelled by using the simple path loss model~\cite{Jakes} with the path loss exponent being 3.6. We also apply channel inversion based power control in both uplink and downlink.}

On the other hand, the radar subsystem consists of a MIMO radar with $N_t=N_r=8$, transmit and receive antennas. Both the radar subsystem and the target are assumed to be located randomly within the cell and their locations follow the same distribution as the users. Unless stated otherwise, the performance of the radar is quantified in terms of the radar rate derived in the previous sections. All the performance metrics presented in this section are generated by averaging over 10,000 realizations of the system.

\subsection{CSI Acqusition}
	\begin{figure} [t]
			\centering
			\includegraphics[width= 0.9\linewidth]{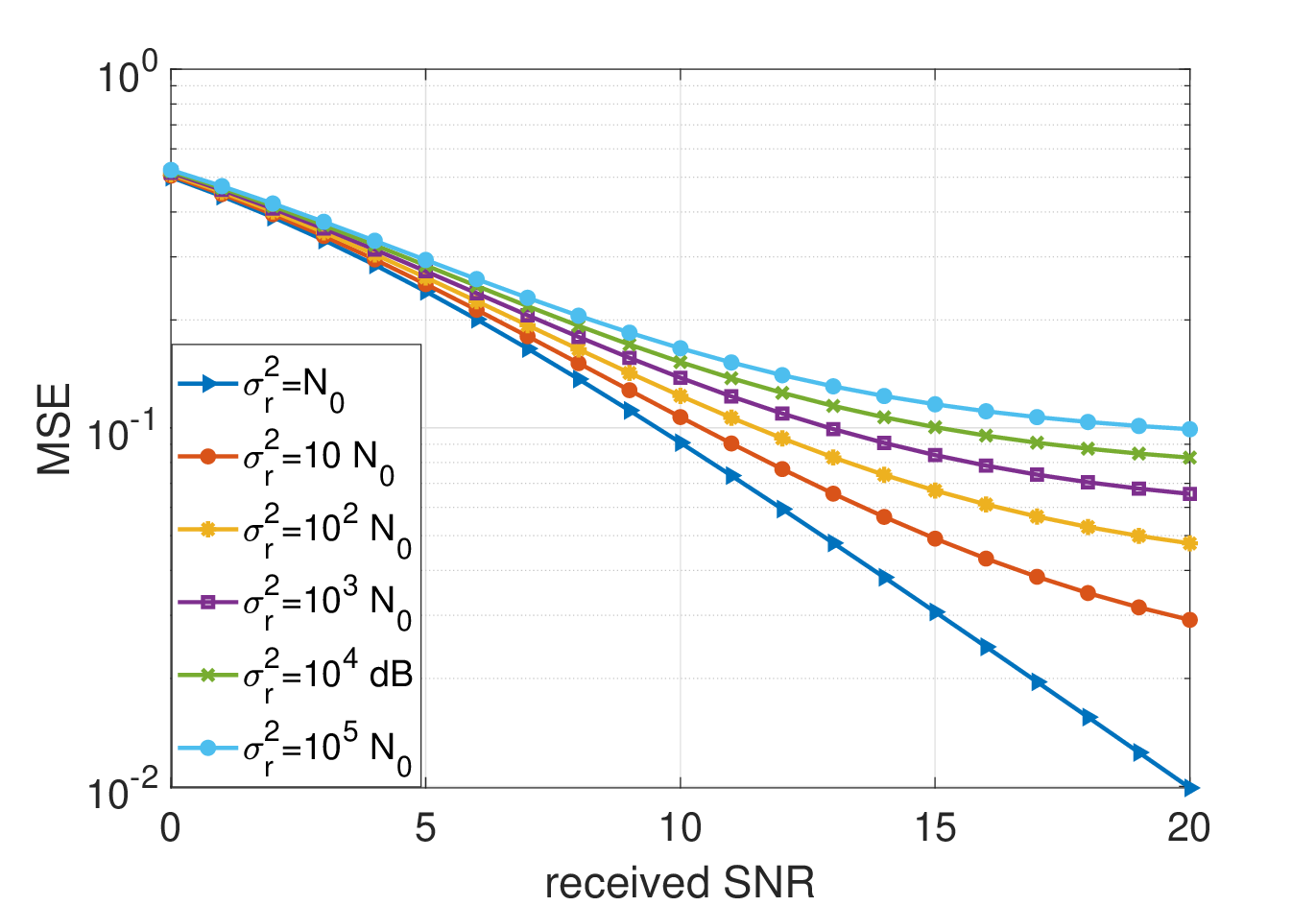}
			\caption{Channel estimation MSE as a function of the received pilot SNR for different values of radar transmit power.}
			\label{f8}
		\end{figure}
	
	 Fig.~$\ref{f8}$ plots the MSE of uplink channel estimate as a function of the received pilot SNR for different values of radar transmit power, $\sigma_r^2$. We observe that a higher radar transmit power does result in a saturation of the channel estimation MSE, however, the impact of radar interference is minimal when both the communication system and the radar are operating at SNRs around 10~dB, which results in a fair channel estimation MSE. 
	\begin{figure} [t]
	\centering
	\includegraphics[width= 0.9\linewidth]{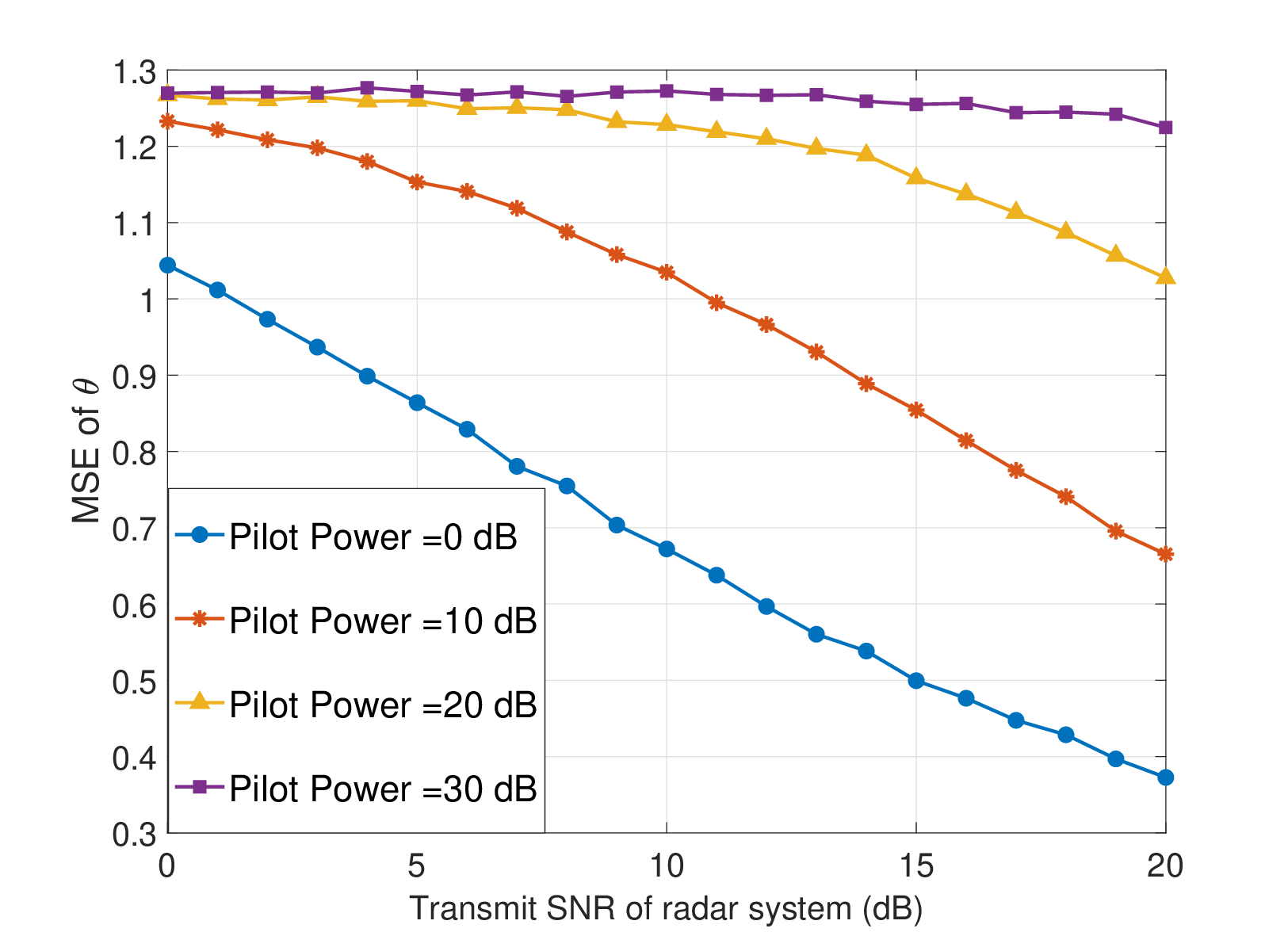}
	\caption{ Radar AoA MSE versus radar SNR for different values of transmit pilot power.}
	\label{f3}
\end{figure}

 In Fig.~$\ref{f3}$ we plot the mean square estimation error of the AoA at the radar as a function of radar received SNR for three different values of pilot powers transmitted by the communication subsystem. \textcolor{black}{We observe that that the radar MSE performance degrades as the pilot power of the communication subsystem, and hence the interference to the radar subsystem increases.}

\textcolor{black}{\subsection{Validation of Asymptotic Approximations}
\begin{figure} [t]
	\centering
	\includegraphics[width= 0.9\linewidth]{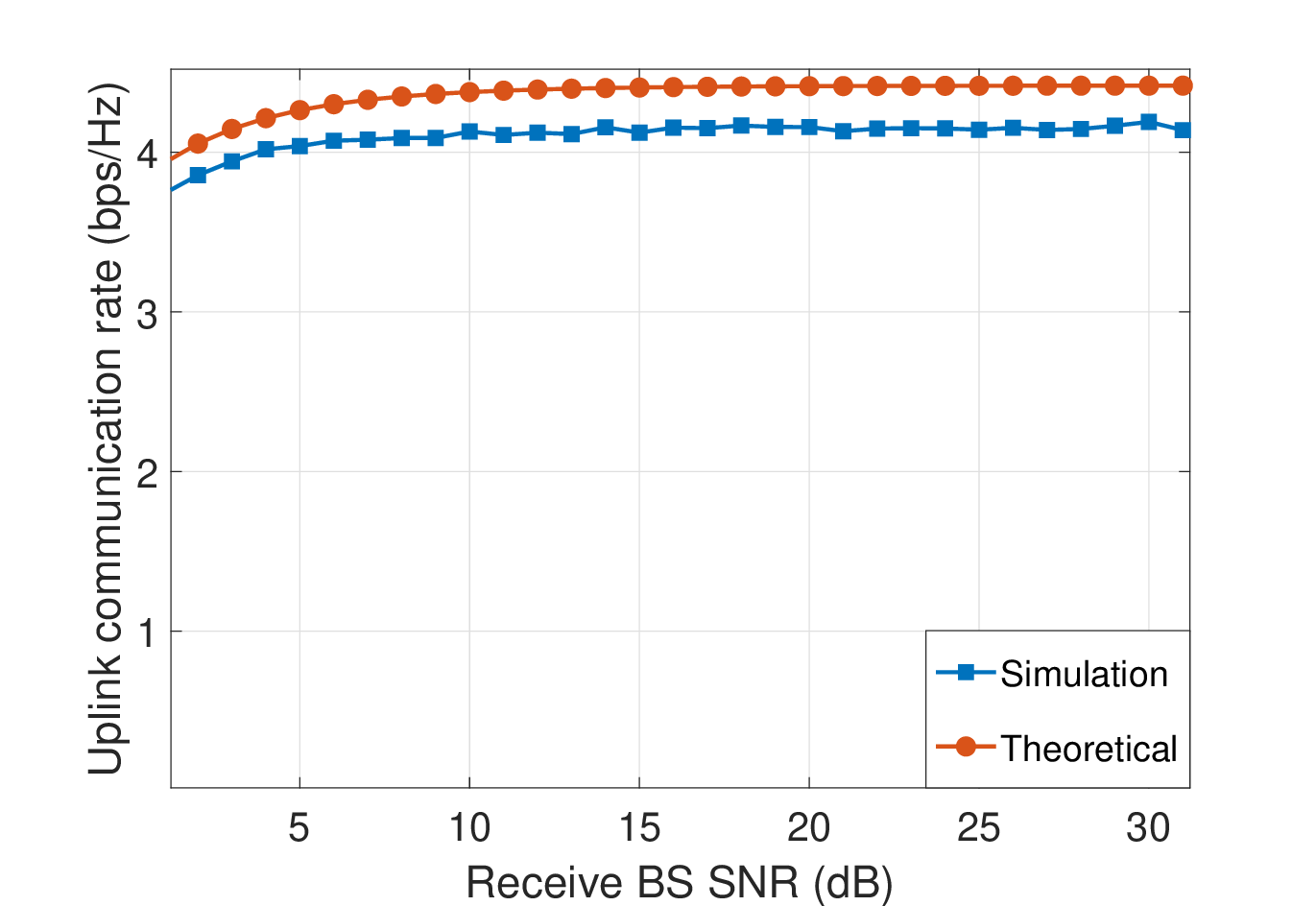}
	\caption{ Average uplink communication rate from both theoretical and simulation analysis versus received data SNR for $\sigma_r^2=N_0$.}
	\label{f55}
\end{figure}
\begin{figure} [t]
	\centering
	\includegraphics[width= 0.9\linewidth]{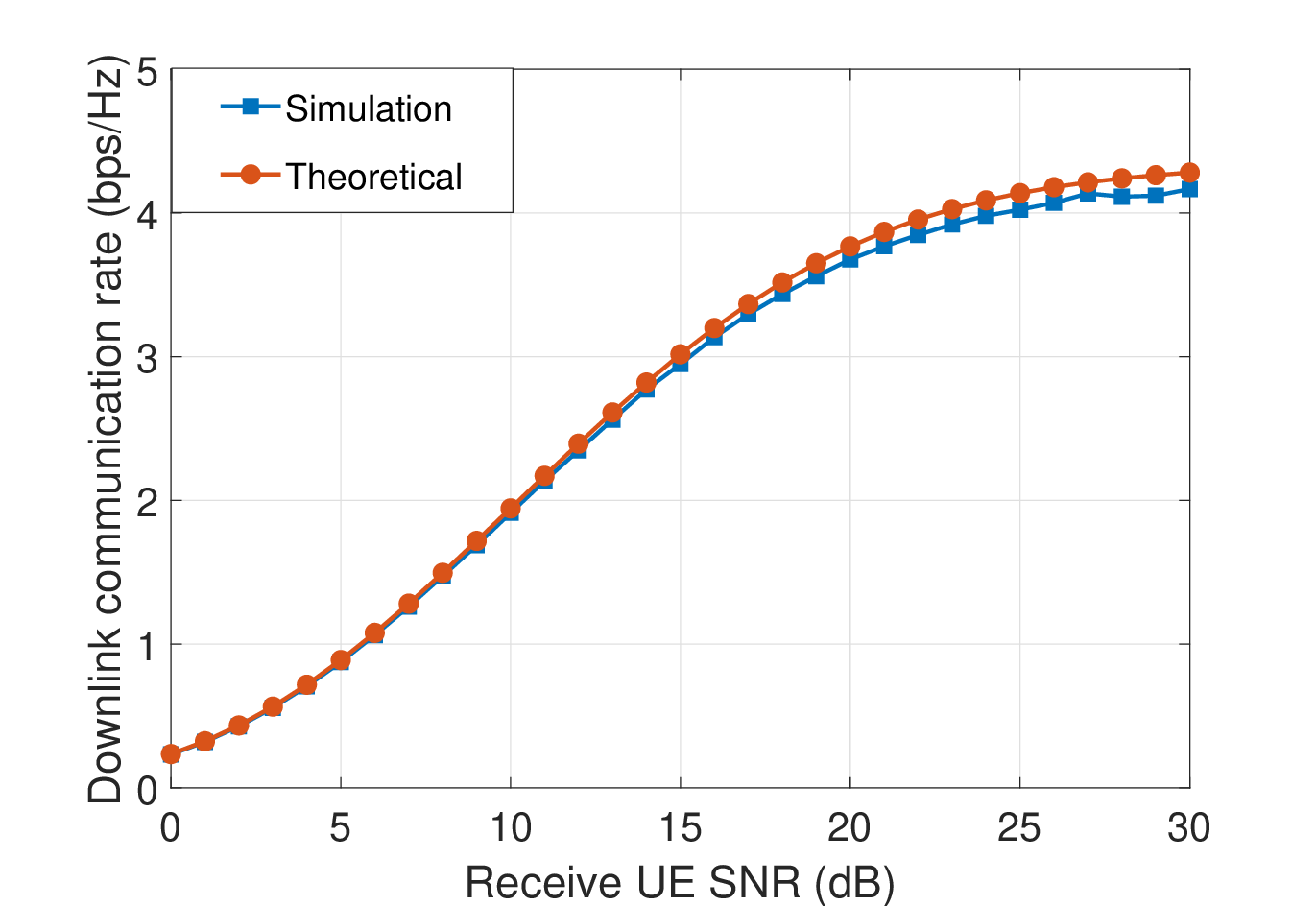}
	\caption{Average communication rate in downlink as a function of the received data SNR from both theoretical and simulation analysis for $\sigma_r^2=N_0$ radar transmitted power.}
	\label{f77}	
\end{figure}
 In Figs.~$\ref{f55}$ and $\ref{f77}$ we respectively plot the achievable uplink and downlink rates of the communication subsystem as functions of the received SNR with $\sigma_r^2=1$. The theoretical values used in these plots are obtained using Theorems~\ref{thm:uplink_comm} and~\ref{thm:downlink_comm} for the uplink and downlink cases respectively. We observe that the simulated results match closely with our derived results, allowing us to use the former for further analysis.}

\subsection{Uplink Data Transmission}
 	\begin{figure} [t]
 	\centering
 	\includegraphics[width= 0.9\linewidth]{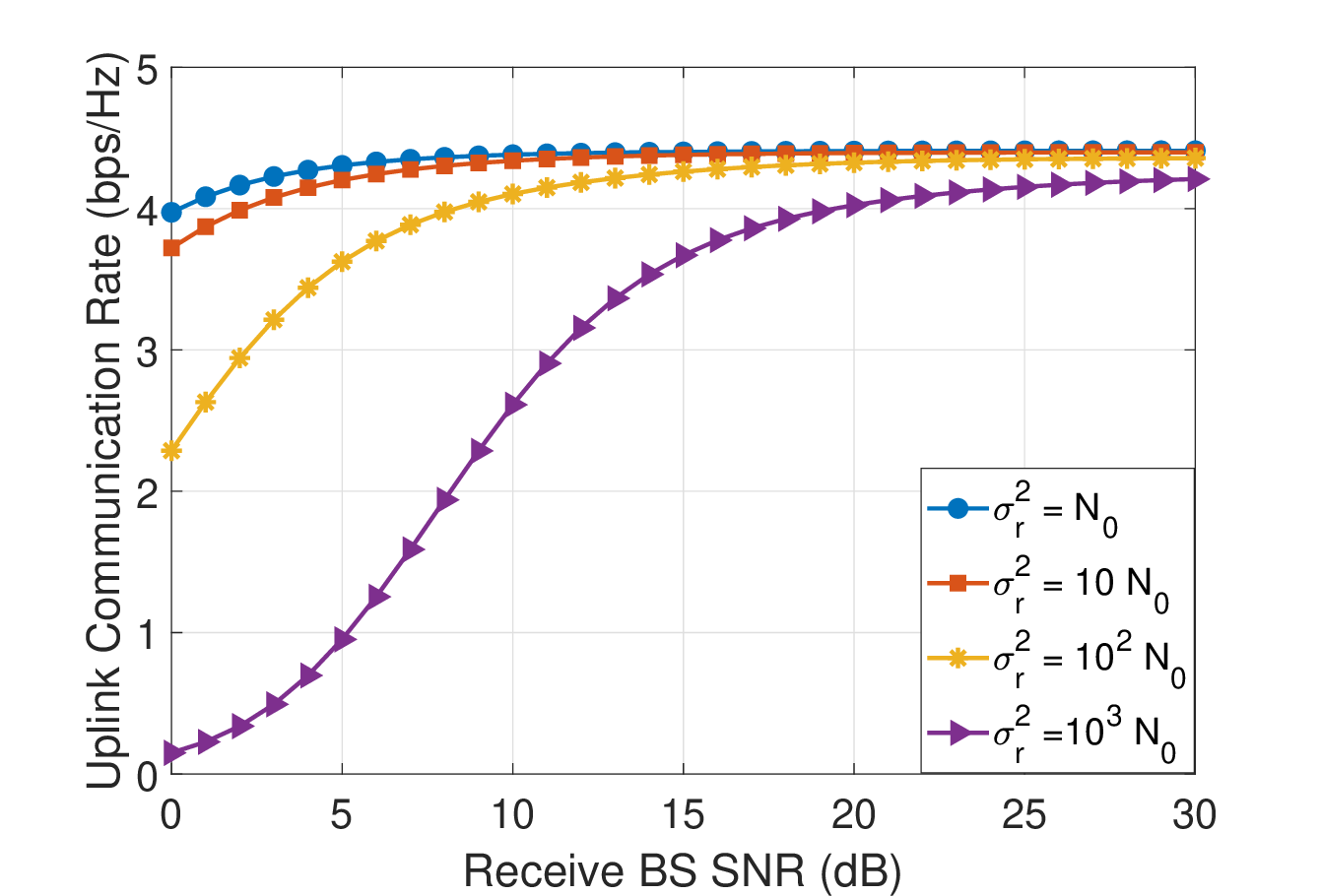}
 	\caption{ Average uplink communication rate versus received data SNR for different values of radar transmit power.}
 	\label{f5}
 \end{figure}
 \textcolor{black}{Fig.~$\ref{f5}$ plots the achievable communication rate in the uplink subframe as a function of received BS SNR for different levels of interference caused by radar subsystem. For higher values of received SNR at BS (i.e. $ >$ 30 dB) we observe that even a radar SNR of 30~dB results in a negligible loss in the average achievable per user rate for the communication system. This is because of the effective cancellation of the radar generated interference at the BS, as postulated in Theorem~\ref{thm:uplink_comm}. Conversely, for low values of BS receive SNR (i.e. $ <$ 15 dB) the loss in the average achievable per user rate is significant for the higher values of radar transmit SNR (i.e. $>$ 20 dB), and therefore, it is safe to conclude that under this SNR regime, the system performance is limited by the radar SNR.}
	\begin{figure} [t]
	\centering
	\includegraphics[width= 0.9\linewidth]{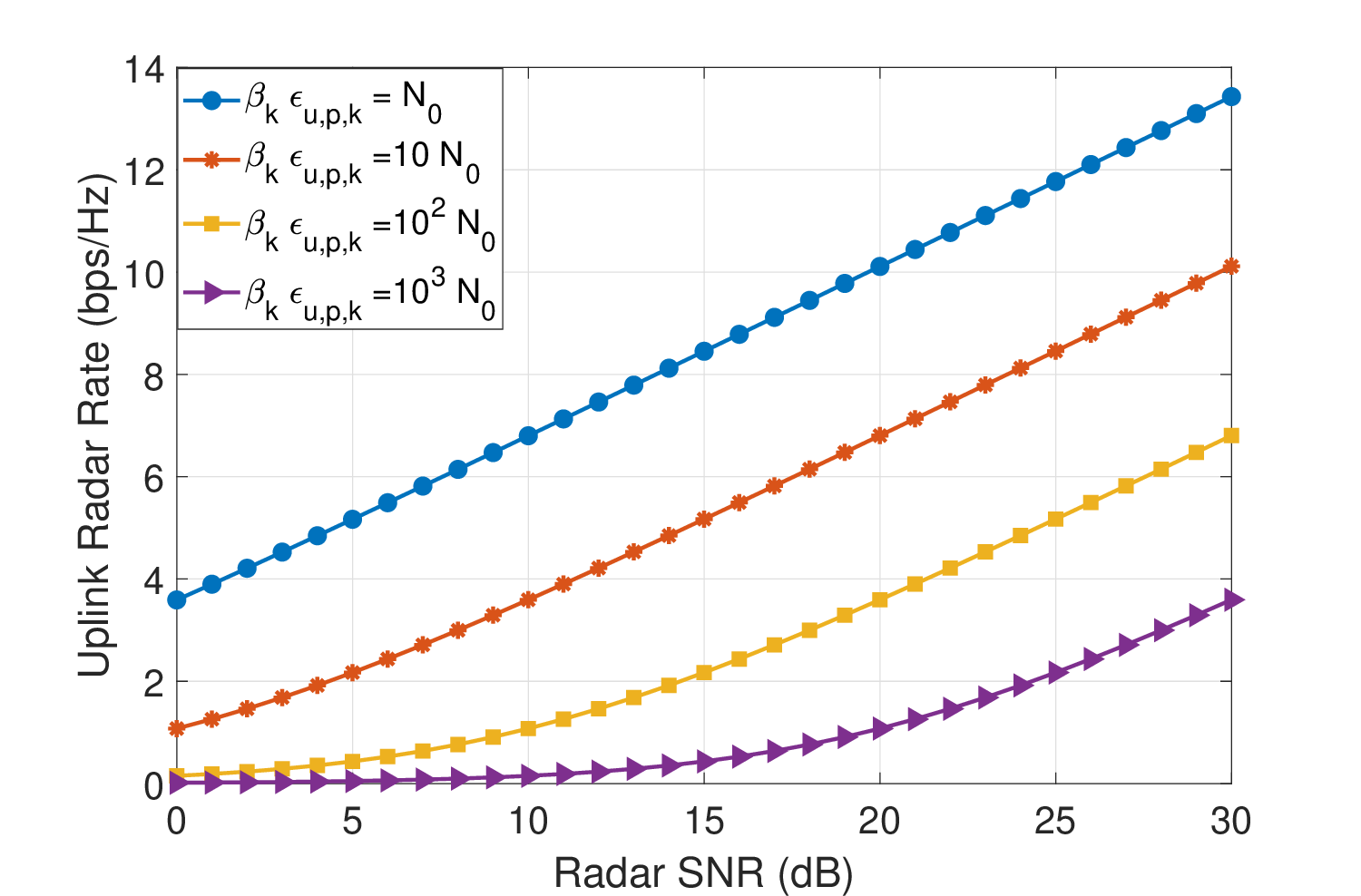}
	\caption{Average uplink radar rate as a function of the radar SNR for different values of uplink data power.}
	\label{f4}
\end{figure}

Fig.~$\ref{f4}$ illustrates the achievable radar rate in the uplink communication subframe as a function of the received SNR for different levels of interference caused by the communication subsystem. We observe a loss of about 3 bits per channel use when the communication subsystem is operating at a received SNR of 10~dB. This result is in line with Theorem~\ref{thm:uplink_radar}, where the radar is shown to face unmitigated interference.
	\begin{figure} [t]
	\centering
	\includegraphics[width= 0.9\linewidth]{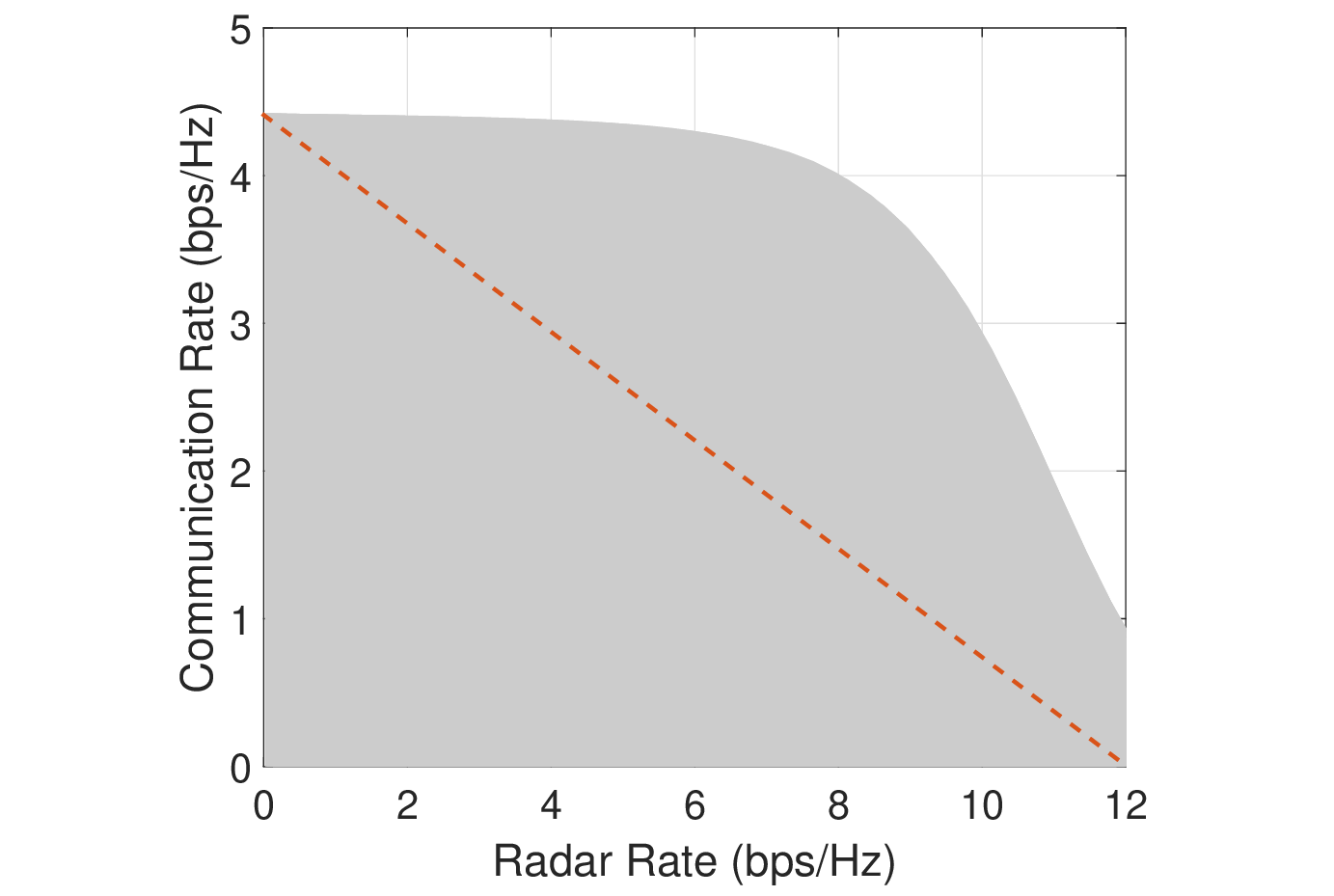}
	\caption{Consolidated Rate Region during the uplink subframe.}
	\label{f9}
\end{figure}

Fig.~$\ref{f9}$ plots the rate region of the JRC system in the uplink communication sub-frame. \textcolor{black}{We observe that the rates achievable by the two subsystems can be traded off with each other, with the dotted line representing the case when the system operates in the time division duplexed mode. We however, observe that the rate region is significantly convex, indicating that the two systems can coexist in a fully shared spectrum with marginal performance losses.} 
	\begin{figure} [t]
	\centering
	\includegraphics[width= 0.9\linewidth]{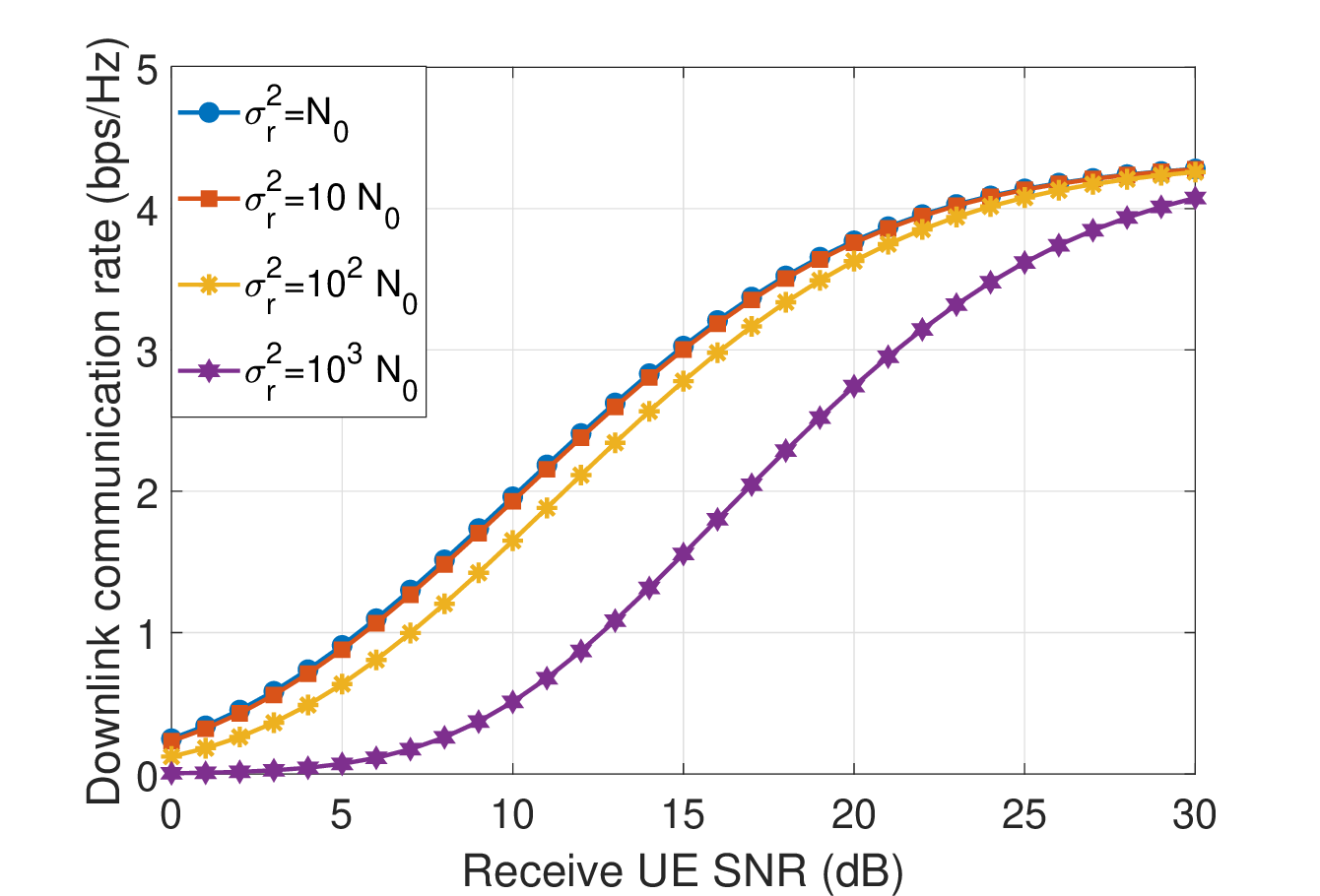}
	\caption{Average communication rate in downlink as a function of the received data SNR for different values of radar transmitted power.}
	\label{f7}	
    \end{figure}
\subsection{Downlink Data Transmission}

Fig.~$\ref{f7}$ illustrates the achievable communication rate in the downlink communication subframe as a function of receive BS SNR for different levels of interference caused by radar subsystem. We observe that in line with the uplink case, the performance degradation is minimal for higher receive UE SNR and the loss is significant for lower values of receive UE SNR. 
	\begin{figure} [t]
	\centering
	\includegraphics[width= 0.9\linewidth]{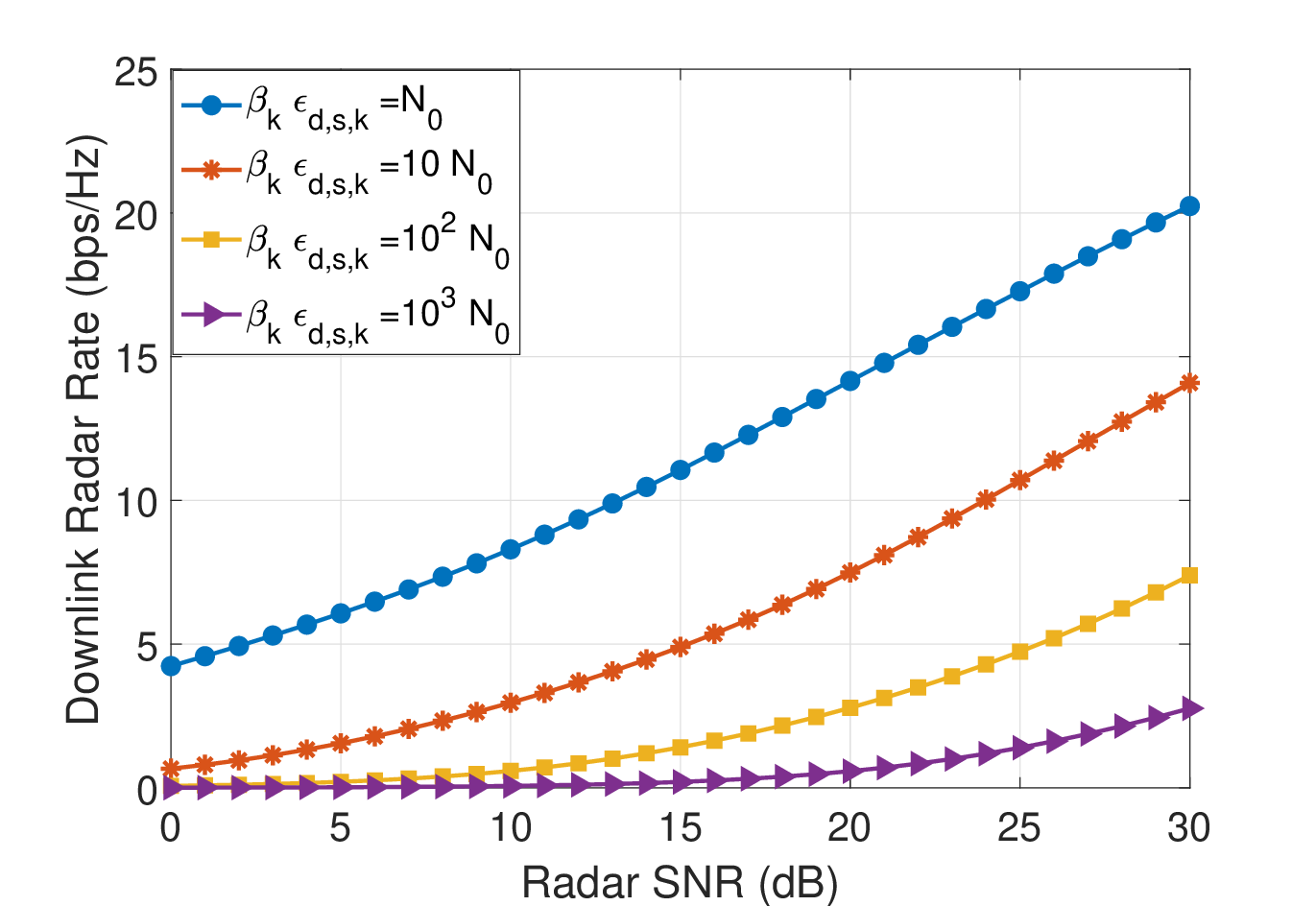}
	\caption{Average radar rate in the downlink as a function of the radar SNR for different values of downlink data power.}
	\label{f6}
    \end{figure}

Fig.~$\ref{f6}$ plots  the achievable radar rate in the downlink communication subframe as a function of the received SNR for different levels of interference by the communication subsystem. It is clearly visible that the radar rate achievable during the downlink sub-frame is better than that achievable during the uplink subframe, indicating the efficacy of the null being formed in the direction of the radar by the massive MIMO BS, as stated in Theorem~\ref{thm:downlink_radar}.   
	\begin{figure} [t]
	\centering
	\includegraphics[width= \linewidth]{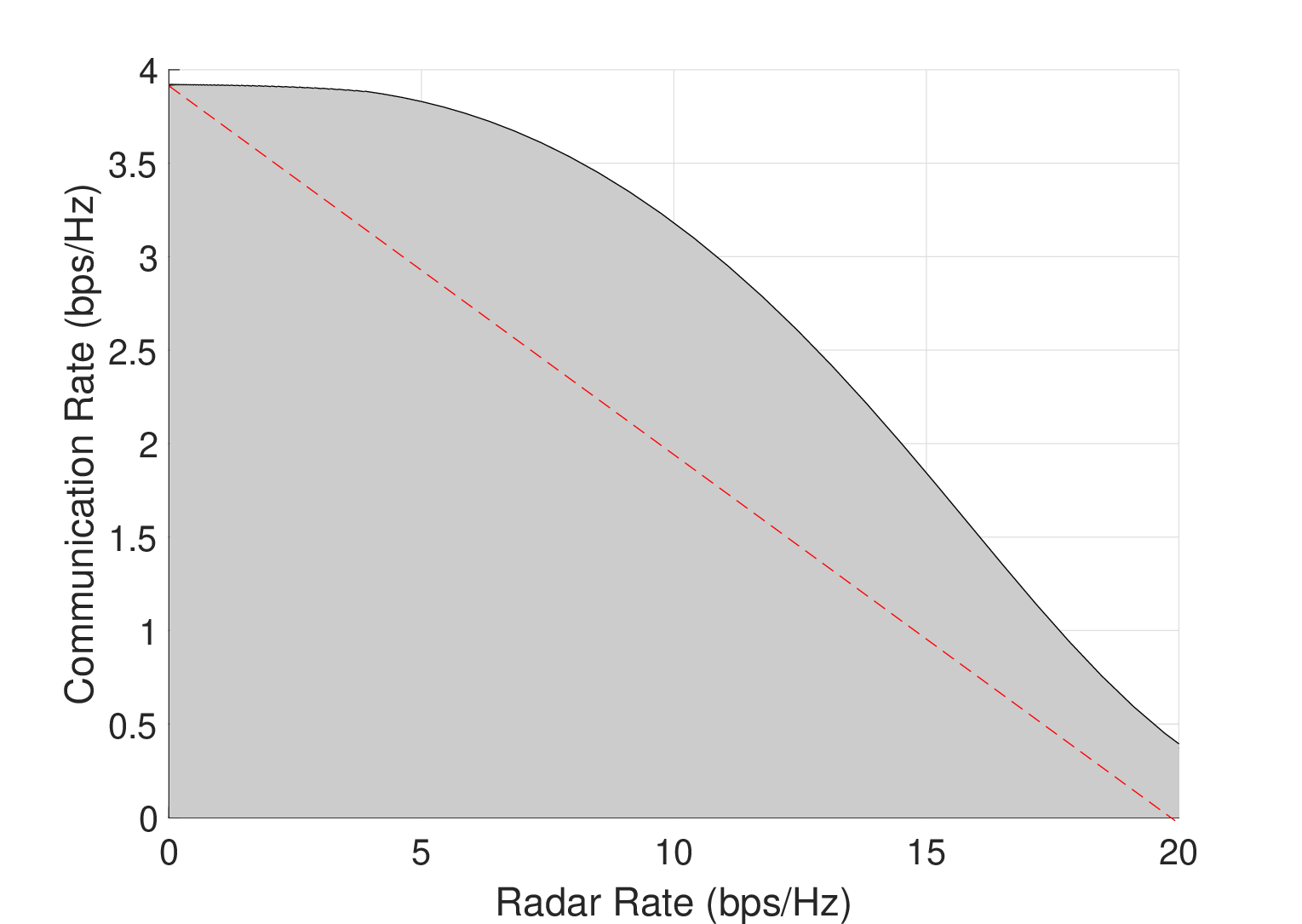}
	\caption{Consolidated Rate Region during the downlink subframe.}
	\label{f10}
\end{figure}

In Fig.~\ref{f10} we plot the achievable rate region for JRC system during the downlink subframe, \textcolor{black}{with the dotted line representing the case where the two systems are operated in the time division duplexed mode. Similar to the uplink case, this rate region is also convex, validating our hypothesis about the ability of massive MIMO systems to coexist with radars.}  
\section{Conclusions and Future Work}
In this paper, we investigated the performance of JRC system in which both  communication and radar sub-systems were operating simultaneously over same spectrum. To evaluate the performance of this system, we modelled it as a multiple access channel with both the subsystems non-cooperatively contending for the available resources. Following this, using results from random matrix theory, and via extensive simulations, we obtained the achievable rate regions for the system considering both uplink and downlink data transmission in the communication subsystem. We have observed these rate regions to be sufficiently convex and can safely conclude that massive MIMO systems can coexist with MIMO radars without any significant co-design. Future work may include the extension of this work to a cell free setting, \textcolor{black}{and the design of power control techniques for performance optimization}. 
\appendix
\subsection{Proof of Theorem~\ref{thm:uplink_comm}} \label{pr:thm_1}
Treating interference as noise~\cite{Hassibi_TIT_2003}, we can write the SINR of the received signal as the ratio of the mean squared value of the desired component to the sum of the mean squared values of all other components. In this context, we can write,   
\begin{equation}
\begin{split}
\zeta_{s,k}
= \beta_k\epsilon_{u,s,k} E\{|\hat{\mathbf{h}}^{H}_{k} \mathbf{R}_{yy|\hat{\mathbf{G}}_{rb},\hat{\mathbf{H}}}^{-1} \hat{\mathbf{h}}_{k}|^{2}  \}\\
\end{split}
\end{equation}
Now,
\begin{equation}
\hat{\mathbf{h}}^{H}_{k} \mathbf{R}_{yy|\hat{\mathbf{G}}_{rb},\hat{\mathbf{H}}}^{-1} \hat{\mathbf{h}}_{k}=\frac{\hat{\mathbf{h}}^{H}_{k} \mathbf{R}_{yy|\hat{\mathbf{G}}_{rb},\hat{\mathbf{H}}_{k}}^{-1} \hat{\mathbf{h}}_{k}}{1+\hat{\mathbf{h}}^{H}_{k} \mathbf{R}_{yy|\hat{\mathbf{G}}_{rb},\hat{\mathbf{H}}_{k}}^{-1} \hat{\mathbf{h}}_{k}}
\end{equation}
where $\hat{\mathbf{H}}_{k}  \in \mathcal{C}^{M \times (K-1)} $ contains all columns of $\mathbf{\hat{H}}$ except $\mathbf{\hat{h}}_{k}$. Thus, $\mathbf{R}_{yy|\hat{\mathbf{G}}_{rb},\hat{\mathbf{H}}_{k}}$ can be expressed as
{\color{black}\begin{multline}
\mathbf{R}_{yy|\hat{\mathbf{G}}_{rb},\hat{\mathbf{H}}_{k}}=\sum_{\substack{m=1, \\ m \neq k}}^{K} \beta_m \epsilon_{u,s,m}  \mathbf{\hat{h}}_m\mathbf{\hat{h}}_m^H+\sigma_r^2\hat{\mathbf{G}}_{rb}\hat{\mathbf{G}}_{rb}^H\\+\sum_{m=1}^{K} \beta_m \epsilon_{u,s,m} \bar{\mathbf{B}}^{2}_{m} +\left(\sigma_r^{2} \eta_{e} +N_{0}\right) \mathbf{I}_{M}.
\end{multline}}
Now, since $\hat{\mathbf{h}}_{k}$ and $\mathbf{R}_{yy|\hat{\mathbf{G}}_{rb},\hat{\mathbf{H}}_{k}}^{-1}$ are independent, 
{\color{black}\begin{equation}
\hat{\mathbf{h}}^{H}_{k} \mathbf{R}_{yy|\hat{\mathbf{G}}_{rb},\hat{\mathbf{H}}_{k}}^{-1} \hat{\mathbf{h}}_{k} \underset{M \rightarrow \infty}{\overset{a.s.}\longrightarrow} \text{Tr}\{\mathbf{R}_{yy|\hat{\mathbf{G}}_{rb},\hat{\mathbf{H}}_{k}}^{-1} \mathbf{B}_{k} \}.
\label{eq:mu_k}
\end{equation}}
Letting {\color{black}$\rho=\sigma_r^{2} \eta_{e} +N_{0}$, $\mathbf{S}= \sum_{m=1}^{K} \beta_m \epsilon_{u,s,m} \bar{\mathbf{B}}^{2}_{m} $} and $\mathbf{D}_{k}$ be a diagonal matrix, such that  its $m$th diagonal element $d_{k,m}$ is given by $d_{k,m}=\beta_{m} \epsilon_{u,s,m}$ where $m \in \{1,2,\ldots,k-1,k+1,\dots,K\}$. We can now write, 
{\color{black} \begin{equation}
\mathbf{R}_{yy|\hat{\mathbf{G}}_{rb},\hat{\mathbf{H}}_{k}}=\hat{\mathbf{H}}_{k} \mathbf{D}_{k,\beta \epsilon_{u,s}}   \hat{\mathbf{H}}_{k}^{H}+\sigma_r^2\hat{\mathbf{G}}_{rb}\hat{\mathbf{G}}_{rb}^H+\mathbf{S}+ \rho \mathbf{I}_{M}.
\end{equation}}
Since $\mathbf{S} \in \mathcal{C}^{M \times M}$ is a non negative definite matrix, $\hat{\mathbf{H}}_{k}\in \mathcal{C}^{M \times (K-1)}$ is a random matrix  and $\rho > 0$, ~\cite{Papazafeiropoulos2015Deterministic},
{\color{black}\begin{equation}
\text{Tr}\{\mathbf{R}_{yy|\hat{\mathbf{G}}_{rb},\hat{\mathbf{H}}_{k}}^{-1} \mathbf{B}_{k} \} \underset{M \rightarrow \infty}{\overset{a.s.}\longrightarrow} \mu_{k}.
\end{equation}}
Back substituting the expression for $\zeta_{s,k}$ results in~\eqref{eq:zeta_uxk}.
Now, 
\begin{equation}
\zeta_{I,k}= \left(  \sum_{\substack{l=1, \\ l \neq k}}^{K} \beta_l \epsilon_{u,s,l} E \{|\hat{\mathbf{h}}^{H}_{k} \mathbf{R}_{yy|\hat{\mathbf{G}}_{rb},\hat{\mathbf{H}}}^{-1} \hat{\mathbf{h}}_{l} |^{2}  \}\right)
\label{eq:zeta_IK}
\end{equation}
\begin{equation}
|\hat{\mathbf{h}}^{H}_{k} \mathbf{R}_{yy|\hat{\mathbf{G}}_{rb},\hat{\mathbf{H}}}^{-1} \hat{\mathbf{h}}_{l}|^{2}=  \frac{|\hat{\mathbf{h}}^{H}_{k} \mathbf{R}_{yy|\hat{\mathbf{G}}_{rb},\hat{\mathbf{H}}_{k}}^{-1} \hat{\mathbf{h}}_{l}|^{2}}{|1 + \hat{\mathbf{h}}^{H}_{k} \mathbf{R}_{yy|\hat{\mathbf{G}}_{rb},\hat{\mathbf{H}}_{k}}^{-1} \hat{\mathbf{h}}_{k}|^{2}}, 
\end{equation}
and,
\begin{multline}
|\hat{\mathbf{h}}^{H}_{k} \mathbf{R}_{yy|\hat{\mathbf{G}}_{rb},\hat{\mathbf{H}}_{k}}^{-1} \hat{\mathbf{h}}_{l}|^{2}   =  \left|\hat{\mathbf{h}}^{H}_{k} \mathbf{R}_{yy|\hat{\mathbf{G}}_{rb},\hat{\mathbf{H}}_{k,l}}^{-1} \hat{\mathbf{h}}_{l}- \right. \\ \left.\frac{ \hat{\mathbf{h}}^{H}_{k} \mathbf{R}_{yy|\hat{\mathbf{G}}_{rb},\hat{\mathbf{H}}_{k,l}}^{-1} \hat{\mathbf{h}}_{l} \hat{\mathbf{h}}^{H}_{l} \mathbf{R}_{yy|\hat{\mathbf{G}}_{rb},\hat{\mathbf{H}}_{k,l}}^{-1} \hat{\mathbf{h}}_{k}}{1+ \hat{\mathbf{h}}^{H}_{l}  \mathbf{R}_{yy|\hat{\mathbf{G}}_{rb},\hat{\mathbf{H}}_{k,l}}^{-1} \hat{\mathbf{h}}_{l}} \right|^{2}, 
\end{multline}
where $\hat{\mathbf{H}}_{k,l}\in \mathcal{C}^{M \times (K-2)}$ contains all columns of $\mathbf{\hat{H}}$ except $\mathbf{\hat{h}}_{k}$ and $\mathbf{\hat{h}}_{l}$. But,
{\color{black}\begin{equation}
\mathbf{R}_{yy|\hat{\mathbf{G}}_{rb},\hat{\mathbf{H}}_{k,l}}=\hat{\mathbf{H}}_{k,l} \mathbf{D}_{k,l,\beta \epsilon_{u,s}}   \hat{\mathbf{H}}_{k,l}^{H}+\sigma_r^2\hat{\mathbf{G}}_{rb}\hat{\mathbf{G}}_{rb}^H+\mathbf{S}+\rho \mathbf{I}_{M},
\end{equation}}
where $\mathbf{D}_{k,l,\beta \epsilon_{u,s}} $  is a diagonal matrix of order $(K-2)$ having the $m$th diagonal entry as $\beta_{m} \epsilon_{u,s,m}$, $m\neq\{k,l\}$. Since, $\hat{\mathbf{h}}^{H}_{k}, \mathbf{R}_{yy|\hat{\mathbf{G}}_{rb},\hat{\mathbf{H}}_{k,l}}^{-1}$ and $ \hat{\mathbf{h}}_{l}$ are independent, we have   
{\color{black}
\begin{equation}
|\hat{\mathbf{h}}^{H}_{k} \mathbf{R}_{yy|\hat{\mathbf{G}}_{rb}, \hat{\mathbf{H}}_{k,l}}^{-1} \hat{\mathbf{h}}_{l}|^{2} \underset{M \rightarrow \infty}{\overset{a.s.}\longrightarrow} \text{Tr}\{ \mathbf{R}_{yy|\hat{\mathbf{G}}_{rb},\hat{\mathbf{H}}_{k,l}}^{-1} \mathbf{B}_{k} \mathbf{R}_{yy|\hat{\mathbf{G}}_{rb},\hat{\mathbf{H}}_{k,l}}^{-1} \mathbf{B}_{l}  \}.
\end{equation}}
Now,
{\color{black}
\begin{equation}
\text{Tr}\{ \mathbf{R}_{yy|\hat{\mathbf{G}}_{rb},\hat{\mathbf{H}}_{k,l}}^{-1} \mathbf{B}_{k} \mathbf{R}_{yy|\hat{\mathbf{G}}_{rb},\hat{\mathbf{H}}_{k,l}}^{-1}  \mathbf{B}_{l} \} \underset{M \rightarrow \infty}{\overset{a.s.}\longrightarrow} \text{Tr}\{\mathbf{B}_{l} \mathbf{T}^{'}_{k,l}(\rho) \},
\end{equation}
\begin{equation}
\hat{\mathbf{h}}^{H}_{l}  \mathbf{R}_{yy|\hat{\mathbf{G}}_{rb}, \hat{\mathbf{H}}_{k,l}}^{-1} \hat{\mathbf{h}}_{l} \underset{M \rightarrow \infty}{\overset{a.s.}\longrightarrow}  \text{Tr}\{ \mathbf{R}_{yy|\hat{\mathbf{G}}_{rb}, \hat{\mathbf{H}}_{k,l}}^{-1} \mathbf{B}_{l}\},
\end{equation}
\begin{equation}
\text{Tr}\{ \mathbf{R}_{yy|\hat{\mathbf{G}}_{rb}, \hat{\mathbf{H}}_{k,l}}^{-1} \mathbf{B}_{l}\} \underset{M \rightarrow \infty}{\overset{a.s.}\longrightarrow} \text{Tr}\{\mathbf{T}_{k,l}(\rho) \mathbf{B}_{l} \}.
\end{equation}}
Hence,
\begin{equation}
\hat{\mathbf{h}}^{H}_{l}  \mathbf{R}_{yy|\hat{\mathbf{G}}_{rb}, \hat{\mathbf{H}}_{k,l}}^{-1} \hat{\mathbf{h}}_{l} \underset{M \rightarrow \infty}{\overset{a.s.}\longrightarrow}  \mu_{k,l},
\end{equation}
we can hence write,
\begin{multline}
|\hat{\mathbf{h}}^{H}_{k} \mathbf{R}_{yy|\hat{\mathbf{G}}_{rb},\hat{\mathbf{H}}}^{-1} \hat{\mathbf{h}}_{l}|^{2}=\frac{1}{|1+\mu_{k}|^{2}}\Bigg(  \mu^{'}_{k,l}+\frac{(\mu^{'}_{k,l})^2}{|1+
	\mu_{k,l}|^{2}} \\- 2 \Re\left\{\frac{( \mu^{'}_{k,l})^{3/2}}{1+ \mu_{k,l}}\right\}\Bigg),
\end{multline} 
back substituting the expressions in \eqref{eq:zeta_IK} to obtain \eqref{eq:zeta_IKR}.
\begin{equation}
\zeta_{E,k}= \sum_{l=1}^{K}   \beta_l\epsilon_{u,s,l} E\{|\hat{\mathbf{h}}^{H}_{k} \mathbf{R}_{yy|\hat{\mathbf{G}}_{rb},\hat{\mathbf{H}}}^{-1}\mathbf{\tilde{h}}_{l} x_l[n]|^{2}\},
\end{equation}
where,
\begin{equation}
|\hat{\mathbf{h}}^{H}_{k} \mathbf{R}_{yy|\hat{\mathbf{G}}_{rb},\hat{\mathbf{H}}}^{-1}\mathbf{\tilde{h}}_{l}|^{2}=\left|\frac{\hat{\mathbf{h}}^{H}_{k} \mathbf{R}_{yy|\hat{\mathbf{G}}_{rb},\hat{\mathbf{H}}_{k}}^{-1}\mathbf{\tilde{h}}_{l}}{1+\hat{\mathbf{h}}^{H}_{k} \mathbf{R}_{yy|\hat{\mathbf{G}}_{rb},\hat{\mathbf{H}}_{k}}^{-1}\mathbf{\hat{h}}_{k}}\right|^{2},
\end{equation}
\begin{equation}
|\hat{\mathbf{h}}^{H}_{k} \mathbf{R}_{yy|\hat{\mathbf{G}}_{rb}, \hat{\mathbf{H}}_{k}}^{-1} \tilde{\mathbf{h}}_{l}|^{2} \underset{M \rightarrow \infty}{\overset{a.s.}\longrightarrow}  \text{Tr}\{ \mathbf{R}_{yy|\hat{\mathbf{G}}_{rb},\hat{\mathbf{H}},k}^{-1} \mathbf{B}_{k} \mathbf{R}_{yy|\hat{\mathbf{G}}_{rb},\hat{\mathbf{H}},k}^{-1}  \bar{\mathbf{B}}^{2}_{l}  \},
\end{equation}
\begin{equation}
\text{Tr}\{ \mathbf{R}_{yy|\hat{\mathbf{G}}_{rb},\hat{\mathbf{H}},k}^{-1} \mathbf{B}_{k} \mathbf{R}_{yy|\hat{\mathbf{G}}_{rb},\hat{\mathbf{H}},k}^{-1}  \bar{\mathbf{B}}^{2}_{l} \} \underset{M \rightarrow \infty}{\overset{a.s.}\longrightarrow} \text{Tr}\{\mathbf{T}^{'}_{k}(\rho) \bar{\mathbf{B}}^{2}_{l}  \},
\label{e10}
\end{equation}
we can use these values to obtain~\eqref{eq:zeta_EkF}.
\begin{equation}
\zeta_{RC,k}=\sum_{i=1}^{N_t} \sigma^{2}_r |\hat{\mathbf{h}}^{H}_{k} \mathbf{R}_{yy|\hat{\mathbf{G}}_{rb},\hat{\mathbf{H}}}^{-1}\mathbf{\hat{g}}_{rb,i}|^{2},
\end{equation}
where
\begin{equation}
\hat{\mathbf{h}}^{H}_{k} \mathbf{R}_{yy|\hat{\mathbf{G}}_{rb},\hat{\mathbf{H}}}^{-1}\mathbf{\hat{g}}_{rb,i}=\frac{\hat{\mathbf{h}}^{H}_{k} \mathbf{R}_{yy|\hat{\mathbf{G}}_{rb},\hat{\mathbf{H}}_{k}}^{-1} \hat{\mathbf{g}}_{rb,i}}{1+\hat{\mathbf{h}}^{H}_{k} \mathbf{R}_{yy|\hat{\mathbf{G}}_{rb},\hat{\mathbf{H}}_{k}}^{-1} \hat{\mathbf{h}}_{k}},
\end{equation}
and,
\begin{multline}
|\hat{\mathbf{h}}^{H}_{k} \mathbf{R}_{yy|\hat{\mathbf{G}}_{rb},\hat{\mathbf{H}}_{k}}^{-1} \hat{\mathbf{g}}_{rb,i}|^{2}   =  \left|\hat{\mathbf{h}}^{H}_{k} \mathbf{R}_{yy|\hat{\mathbf{G}}_{rb,i},\hat{\mathbf{H}}_{k}}^{-1} \hat{\mathbf{g}}_{rb,i}- \right. \\ \left.\frac{ \hat{\mathbf{h}}^{H}_{k} \mathbf{R}_{yy|\hat{\mathbf{G}}_{rb,i},\hat{\mathbf{H}}_k}^{-1} \hat{\mathbf{g}}_{rb,i} \hat{\mathbf{g}}^{H}_{rb,i} \mathbf{R}_{yy|\hat{\mathbf{G}}_{rb,i},\hat{\mathbf{H}}_k}^{-1} \hat{\mathbf{h}}_{k}}{1+ \hat{\mathbf{g}}^{H}_{rb,i}  \mathbf{R}_{yy|\hat{\mathbf{G}}_{rb,i},\hat{\mathbf{H}}_k}^{-1} \hat{\mathbf{g}}_{rb,i}} \right|^{2}, 
\end{multline}
\begin{equation}
\mathbf{R}_{yy|\hat{\mathbf{G}}_{rb,i},\hat{\mathbf{H}}_{k}}=\hat{\mathbf{H}}_{k} \mathbf{D}_{k,\beta \epsilon_{u,s}}   \hat{\mathbf{H}}_{k}^{H}+\sigma_r^2\hat{\mathbf{G}}_{rb,i}\hat{\mathbf{G}}_{rb,i}^H+\mathbf{S}+\rho \mathbf{I}_{M},
\end{equation}
where $\hat{\mathbf{G}}_{rb,i}$ contains all the columns of $\hat{\mathbf{G}}_{rb}$ except $\hat{\mathbf{g}}_{i}$. Since, $\hat{\mathbf{h}}^{H}_{k}, \mathbf{R}_{yy|\hat{\mathbf{G}}_{rb,i}\hat{\mathbf{H}}_k}^{-1}$ and $ \hat{\mathbf{g}}_{rb,i}$ are independent, we have   
{\color{black}
\begin{multline}
|\hat{\mathbf{h}}^{H}_{k} \mathbf{R}_{yy|\hat{\mathbf{G}}_{rb,i} \hat{\mathbf{H}}_{k}}^{-1} \hat{\mathbf{g}}_{rb,i}|^{2} \underset{M \rightarrow \infty}{\overset{a.s.}\longrightarrow}   \text{Tr}\{ \mathbf{R}_{yy|\hat{\mathbf{G}}_{rb,i} \hat{\mathbf{H}}_{k}}^{-1} \mathbf{B}_{k} \times \\ \mathbf{R}_{yy|\hat{\mathbf{G}}_{rb,i} \hat{\mathbf{H}}_{k}}^{-1}  (\eta_{I}-\eta_{e}) \mathbf{I}_{M}  \} ,
\end{multline}
\begin{multline}
\text{Tr}\{ \mathbf{R}_{yy|\hat{\mathbf{G}}_{rb,i} \hat{\mathbf{H}}_{k}}^{-1} \mathbf{B}_{k} \mathbf{R}_{yy|\hat{\mathbf{G}}_{rb,i} \hat{\mathbf{H}}_{k}}^{-1} (\eta_{I}-\eta_{e}) \mathbf{I}_{M}   \} \underset{M \rightarrow \infty}{\overset{a.s.}\longrightarrow} \\ \text{Tr}\{\mathbf{T}^{'}_{k,i}(\rho) (\eta_{I}-\eta_{e}) \mathbf{I}_{M}  \},
\end{multline}}
\begin{equation}
\hat{\mathbf{g}}^{H}_{rb,i}  \mathbf{R}_{yy|\hat{\mathbf{G}}_{rb,i},\hat{\mathbf{H}}_k}^{-1} \hat{\mathbf{g}}_{rb,i} \underset{M \rightarrow \infty}{\overset{a.s.}\longrightarrow}  \text{Tr}\{ \mathbf{R}_{yy|\hat{\mathbf{G}}_{rb,i},\hat{\mathbf{H}}_k}^{-1} (\eta_{I}-\eta_{e}) \mathbf{I}_{M} \},
\end{equation}
\begin{equation}
\text{Tr}\{ \mathbf{R}_{yy|\hat{\mathbf{G}}_{rb,i},\hat{\mathbf{H}}_k}^{-1} (\eta_{I}-\eta_{e}) \mathbf{I}_{M} \} \underset{M \rightarrow \infty}{\overset{a.s.}\longrightarrow} \text{Tr}\{\mathbf{T}_{k,i}(\rho) (\eta_{I}-\eta_{e}) \mathbf{I}_{M}  \},
\end{equation}
\begin{multline}
|\hat{\mathbf{h}}^{H}_{k} \mathbf{R}_{yy|\hat{\mathbf{G}}_{rb},\hat{\mathbf{H}}_{k}}^{-1} \hat{\mathbf{g}}_{rb,i}|^{2}=\mu^{'}_{k,i} \\ +  \frac{(b^{2}_{k} \mu^{'}_{k,i})^{2}}{|1+\mu_{k,i}|^{2}}-2 \Re \left\{\frac{(b^{2}_{k} \mu^{'}_{k,i})^{3/2}}{1+\mu_{k,i}}\right\},
\label{eq:zeta_Ekc}
\end{multline}
we can use these values to obtain~\eqref{eq:zeta_RCkF}.
Finally, \begin{multline}
\zeta_{RE,k}= \sum_{i=1}^{N_{t}}E\left[|\hat{\mathbf{h}}^{H}_{k} \mathbf{R}_{yy|\hat{\mathbf{G}}_{rb},\hat{\mathbf{H}}}^{-1}\mathbf{\tilde{g}}_{rb,i}{s_{i}}[n]|^{2} \right],\\
\label{eq:zeta_rekP} 
\end{multline}
and 
\begin{equation}
\zeta_{w,k}=N_0 E\left\{|\hat{\mathbf{h}}^{H}_{k} \mathbf{R}_{yy|\hat{\mathbf{G}}_{rb},\hat{\mathbf{H}}}^{-1}\mathbf{w}_b|^{2}\right\},
\end{equation}
can be simplified using techniques similar to the ones used for $\zeta_{E,k}$.
\subsection{Proof of Theorem~2}\label{pr:thm_2}
Let $\mathbf{Z}=[\mathbf{z}[1],\mathbf{z}[2],\dots,\mathbf{z}[N] ]$ such that $\mathbf{Z}$ consists of iid columns with $\mathbf{z}[n]\sim \mathcal{CN}( h_{rr} \mathbf{A}(\theta) \mathbf{s}[n],\sigma^{2}_{wr} \mathbf{I}_{N_{r}})$ where $\sigma^{2}_{wr} = \sum_{k=1}^{K} \epsilon_{u,p,k} \eta_{rk} +N_{0}$. Now, the log-likelihood function of $\mathbf{Z}$ is as given
\begin{multline}
	\ln(f_{\mathbf{Z}}(\mathbf{z}))=-N_{r}N \ln(\pi \sigma^{2}_{wr}){-\frac{1}{\sigma^{2}_{wr}} \sum_{n=1}^{N}||\mathbf{z}_{n}||^{2}} \\ + {\frac{2}{\sigma^{2}_{wr}}\sum_{n=1}^{N}\Re \{h_{rr}^{*}  \mathbf{s}^{H}[n] \mathbf{A}^{H}(\theta) \mathbf{z}_{n} \}}  {-\frac{1}{\sigma^{2}_{wr}}\sum_{n=1}^{N}||h_{rr} \mathbf{A}(\theta) \mathbf{s}[n]||^{2}}
	\\=
	\ln(f_{\mathbf{Z}}(\mathbf{z}))=-N_{r}N \ln(\pi \sigma^{2}_{wr}){-\frac{1}{\sigma^{2}_{wr}} \sum_{n=1}^{N}||\mathbf{z}_{n}||^{2}} \\ + {\frac{2}{\sigma^{2}_{wr}}\Re \{h_{rr}^{*}  \sum_{n_{t}=1}^{N_{t}} \mathbf{A}_{:n_{t}}^{H}(\theta) 	\mathbf{\Omega}_{n_{t}} \}}  {-\frac{1}{\sigma^{2}_{wr}}\sum_{n=1}^{N}||h_{rr} \mathbf{A}(\theta) \mathbf{s}[n]||^{2}},
\end{multline}
where $\mathbf{A}_{:n_{t}}(\theta)$ denotes the $n_{t}$th column of $\mathbf{A}(\theta)$ and $	\mathbf{\Omega}_{n_{t}} $ is the $n_{t}$th sufficient statistic defined as $
\mathbf{\Omega}_{n_{t}}=\sum_{n=1}^{N}\mathbf{z}[n]s^{*}_{n_{t}}[n] , n_{t}=1,2,\ldots,N_{t}$;
that is obtained by multiplying the observed data with the $n_{t}$th transmitted signal. The sufficient statistic matrix $\mathbf{\Omega}$ can now be expressed as 
\begin{equation}
	\mathbf{\Omega}=\text{vec}[\mathbf{\Omega}_{1},\mathbf{\Omega}_{2},\dots,\mathbf{\Omega}_{N_{t}}]
	=h_{rr}\mathbf{d}(\theta)+\mathbf{v}_{u},
\end{equation}
where $\mathbf{d}(\theta)=\text{vec}(\mathbf{A}(\theta))$, and $\mathbf{v}_{u}=\text{vec}(\sum_{n=1}^{N}\sum_{k=1}^{K}\sqrt{\epsilon_{u,p,k}}\mathbf{g}_{kr} \psi_k[n]\mathbf{s}^{H}[n]+\sum_{n=1}^{N}\sqrt{N}_0 \mathbf{w}_r[n]\mathbf{s}^{H}[n])$, such that $\mathbf{v}_{u} \sim \mathcal{CN}(\mathbf{0},(N_{0}\sigma_r^{2}+\sum_{k=1}^{K} \epsilon_{u,p,k} \eta_rk\sigma_r^{2})\mathbf{I}_{N_r N_t})$.

Now, the Fisher information~\cite{kay1993fundamentals} for estimating $\theta$ is given as
\begin{equation}
	J_{\theta \theta}=\frac{2\sigma_r^2|h_{rr}|^2}{N_0+\sum_{k=1}^{K} \epsilon_{u,p,k} \eta_rk}\Re\{\text{Tr}\{\mathbf{\dot{A}}(\theta)\mathbf{\dot{A}}^H(\theta)\}\},
\end{equation}
and hence the CRB for $\theta$ takes the form $\text{CRB}(\theta) =\frac{1}{J_{\theta\theta}} $.
\subsection{Proof of Theorem~\ref{thm:downlink_comm}} \label{pr:thm_3}
	We can write
	\begin{equation}
	\zeta_{r,k}=\beta_k \epsilon_{d,s,k} E[|\hat{\mathbf{h}}_{k}^{T}  (\bar{\mathbf{H}} \bar{\mathbf{H}}^{H} + \alpha \mathbf{I}_{M})^{-1} \hat{\mathbf{h}}_{k}^{*} |^{2}]
	\label{zeta_{r,k}}
	\end{equation}
	Following steps similar to the proof of Theorem 1, we get \eqref{Zeta_{r,k,t}}. 
	Similarly, \begin{equation}
	\zeta_{r,I,k}= \sum_{\substack{m=1, \\ m \neq k}}^{K} \beta_k \epsilon_{d,s,m} E\{ |\hat{\mathbf{h}}_{k}^{T}  (\bar{\mathbf{H}} \bar{\mathbf{H}}^{H} +  \alpha \mathbf{I}_{M})^{-1} \hat{\mathbf{h}}_{m}^{*} |^{2}\}
	\label{zeta_{r,I,k}}.
	\end{equation}
can be shown to reduce to \eqref{zeta_{r,I,k,t}}, and
	\begin{equation}
	\zeta_{r,E,k}=\sum_{l=1 }^{K} \beta_k \epsilon_{d,s,l}  E\{|\tilde{\mathbf{h}}_{k}^{T} (\bar{\mathbf{H}} \bar{\mathbf{H}}^{H} + \alpha \mathbf{I}_{M})^{-1} \hat{\mathbf{h}}_{l}^{*} |^{2}\},
	\label{zeta_{r,E,k}}
	\end{equation}
can be shown to reduce to \eqref{zeta_{r,E,k,t}}.
	\subsection{Proof of Theorem~\ref{thm:downlink_radar}}\label{pr:thm_4}
		Using results from~\cite{Papa_TVT_2016}we can show that,
	{\color{black}	\begin{multline}
		E[(\mathbf{\tilde{G}}_{br}^{H}\mathbf{Q}\text{diag} (\sqrt{\boldsymbol{\epsilon}_{d,s}}) \bar{\mathbf{p}}[n])(\mathbf{\tilde{G}}_{br}^{H}\mathbf{Q}\text{diag} (\sqrt{\boldsymbol{\epsilon}_{d,s}}) \bar{\mathbf{p}}[n])^{H}]  
\\ \underset{M \rightarrow \infty}{\overset{a.s.}\longrightarrow} \mu^{'}_{\alpha}
		\label{mu^{'}}.
		\end{multline}}
	and
	{\color{black}	\begin{multline}
		E[(\mathbf{\hat{g}}_{br,m}^{H}\mathbf{Q}_{\mathbf{\hat{g}}_{br,m}}\text{diag} (\sqrt{\boldsymbol{\epsilon}_{d,s}})\\\times  \bar{\mathbf{p}}[n])(\mathbf{\hat{g}}_{br,m}^{H}\mathbf{Q}_{\mathbf{\hat{g}}_{br,m}}\text{diag} (\sqrt{\boldsymbol{\epsilon}_{d,s}}) \bar{\mathbf{p}}[n])^{H}]  	\underset{M \rightarrow \infty}{\overset{a.s.}\longrightarrow}
		\label{mu^{'}}.
\mu_{\mathbf{\hat{g}}_{br,m}}
\end{multline}}
		\begin{multline}
		E[|(\mathbf{\hat{g}}_{br,m}^{H}\mathbf{Q}\text{diag} (\sqrt{\boldsymbol{\epsilon}_{d,s}}) \bar{\mathbf{p}}[n])|^{2}]=\frac{\mu^{'}_{i,\alpha}}{|1+ \mu_{\mathbf{\hat{g}}_{br,m}}|^{H}}
		\end{multline}
		Letting $\mathbf{\tilde{z}}[n]=\mathbf{\hat{G}}_{br}\mathbf{Q}\text{diag} (\sqrt{\boldsymbol{\epsilon}_{d,s}}) \bar{\mathbf{p}}[n]+\mathbf{\tilde{G}}_{br}\mathbf{Q}\text{diag} (\sqrt{\boldsymbol{\epsilon}_{d,s}}) \bar{\mathbf{p}}[n] +\sqrt{N}_0 \mathbf{w}_r[n]$, represent the noise and interference, it is easy to show that,
		\begin{equation}
		E[\mathbf{z}[n]\mathbf{z}^{H}[n]] \triangleq \sigma_{wr}^2 \mathbf{I}_{N_{r}}  
		\end{equation}
		Let $\mathbf{Z}=[\mathbf{z}[1],\mathbf{z}[2],\dots,\mathbf{z}[N] ]$ Therefore, 
		\begin{multline}
		\ln(f_{\mathbf{Z}}(\mathbf{z}))=-N_{r}N \ln(\pi \sigma^{2}_{wr,d}){-\frac{1}{\sigma^{2}_{wr,d}} \sum_{n=1}^{N}||\mathbf{z}_{n}||^{2}} \\ + {\frac{2}{\sigma^{2}_{wr,d}}\Re \{h_{rr}^{*}  \sum_{n_{t}=1}^{N_{t}} \mathbf{A}_{:n_{t}}^{H}(\theta) 	\mathbf{\Omega}_{n_{t}} \}}  {-\frac{1}{\sigma^{2}_{wr,d}}\sum_{n=1}^{N}||h_{rr} \mathbf{A}(\theta) \mathbf{s}[n]||^{2}}.
		\end{multline}
		The sufficient statistic matrix $\mathbf{\Omega}_{d}$ can now be expressed as 
		\begin{equation}
		\mathbf{\Omega}_{d}=\text{vec}[\mathbf{\Omega}_{1,d},\mathbf{\Omega}_{2,d},\dots,\mathbf{\Omega}_{N_{t},d}]
		=h_{rr}\mathbf{d}(\theta)+\mathbf{v}_{d},
		\end{equation}
		where $\mathbf{d}(\theta)=\text{vec}(\mathbf{A}(\theta))$, and $\mathbf{v}_{d}=\text{vec}\left(\sum_{n=1}^{N}(\mathbf{\hat{G}}_{br}^{H}\mathbf{Q}\text{diag} (\sqrt{\boldsymbol{\epsilon}_{d,s}}) \bar{\mathbf{p}}[n]+\mathbf{\tilde{G}}_{br}^{H}\mathbf{Q}\text{diag} (\sqrt{\boldsymbol{\epsilon}_{d,s}}) \bar{\mathbf{p}}[n]) \times \right. \\ 
		\left. \mathbf{s}^{H}[n]  +\sum_{n=1}^{N}\sqrt{N}_0 \mathbf{w}_r[n]\mathbf{s}^{H}[n] \right)$, 
		such that $\mathbf{v}_{d} \sim \mathcal{CN}(\mathbf{0},\sigma_{wr,d}^2 \sigma_r^{2} \mathbf{I}_{N_r N_t})$.
		
		Therefore, the Fisher information~\cite{kay1993fundamentals} for estimating $\theta$ is given as
		\begin{equation}
		J_{\theta \theta}=\frac{2\sigma_r^2|h_{rr}|^2}{\sigma_{wr}^2}\Re\{\text{Tr}\{\mathbf{\dot{A}}(\theta)\mathbf{\dot{A}}^H(\theta)\}\},
		\end{equation}
		and hence the CRB for $\theta$ takes the form $\text{CRB}(\theta) =\frac{1}{J_{\theta\theta}} $.
 \bibliographystyle{ieeetran}
\bibliography{bibJournalList,Jour_RCC_mMIMO_v4}

\begin{thebibliography}{10}
\providecommand{\url}[1]{#1}
\csname url@samestyle\endcsname
\providecommand{\newblock}{\relax}
\providecommand{\bibinfo}[2]{#2}
\providecommand{\BIBentrySTDinterwordspacing}{\spaceskip=0pt\relax}
\providecommand{\BIBentryALTinterwordstretchfactor}{4}
\providecommand{\BIBentryALTinterwordspacing}{\spaceskip=\fontdimen2\font plus
\BIBentryALTinterwordstretchfactor\fontdimen3\font minus
  \fontdimen4\font\relax}
\providecommand{\BIBforeignlanguage}[2]{{%
\expandafter\ifx\csname l@#1\endcsname\relax
\typeout{** WARNING: IEEEtran.bst: No hyphenation pattern has been}%
\typeout{** loaded for the language `#1'. Using the pattern for}%
\typeout{** the default language instead.}%
\else
\language=\csname l@#1\endcsname
\fi
#2}}
\providecommand{\BIBdecl}{\relax}
\BIBdecl

\bibitem{NHTSA}
{United States Department of Transportation, NHTSA}, ``Automated vehicles for
  safety,'' n.d., accessed Dec 11, 2020.
  \url{https://www.nhtsa.gov/technology-innovation/automated-vehicles-safety}.

\bibitem{need_of_coexistence_survey_2017}
M.~Labib, V.~Marojevic, A.~F. Martone, J.~H. Reed, and A.~I. Zaghloui,
  ``Coexistence between communications and radar systems: A survey,''
  \emph{URSI Radio Science Bulletin}, vol. 2017, no. 362, pp. 74--82, 2017.

\bibitem{Liu_JSAC_2022}
F.~Liu, Y.~Cui, C.~Masouros, J.~Xu, T.~X. Han, Y.~C. Eldar, and S.~Buzzi,
  ``Integrated sensing and communications: Toward dual-functional wireless
  networks for 6g and beyond,'' \emph{IEEE Journal on Selected Areas in
  Communications}, vol.~40, no.~6, pp. 1728--1767, 2022.

\bibitem{chiriyath2017radar}
A.~R. Chiriyath, B.~Paul, and D.~W. Bliss, ``Radar-communications convergence:
  Coexistence, cooperation, and co-design,'' \emph{IEEE Transactions on
  Cognitive Communications and Networking}, vol.~3, no.~1, pp. 1--12, 2017.

\bibitem{liu2020joint}
F.~Liu, C.~Masouros, A.~P. Petropulu, H.~Griffiths, and L.~Hanzo, ``Joint radar
  and communication design: Applications, state-of-the-art, and the road
  ahead,'' \emph{IEEE Transactions on Communications}, vol.~68, no.~6, pp.
  3834--3862, 2020.

\bibitem{intro_MMIMO}
L.~Lu, G.~Y. Li, A.~L. Swindlehurst, A.~Ashikhmin, and R.~Zhang, ``An overview
  of massive {MIMO}: Benefits and challenges,'' \emph{{IEEE} J. Sel. Topics
  Signal Process.}, vol.~8, no.~5, pp. 742--758, Oct. 2014.

\bibitem{myths}
E.~Bj\"{o}rnson, E.~G. Larsson, and T.~L. Marzetta, ``Massive {MIMO}: {Ten}
  myths and one critical question,'' \emph{IEEE Commun.\ Mag.}, vol.~54, no.~2,
  pp. 114--123, Feb. 2016.

\bibitem{marzetta2016fundamentals}
T.~L. Marzetta, \emph{Fundamentals of massive MIMO}.\hskip 1em plus 0.5em minus
  0.4em\relax Cambridge University Press, 2016.

\bibitem{mMIMO_radar}
S.~Fortunati, L.~Sanguinetti, F.~Gini, M.~S. Greco, and B.~Himed, ``Massive
  {MIMO} radar for target detection,'' \emph{IEEE Transactions on Signal
  Processing}, vol.~68, pp. 859--871, 2020.

\bibitem{mahal2017spectral}
J.~A. Mahal, A.~Khawar, A.~Abdelhadi, and T.~C. Clancy, ``Spectral coexistence
  of {MIMO} radar and {MIMO} cellular system,'' \emph{IEEE Transactions on
  Aerospace and Electronic Systems}, vol.~53, no.~2, pp. 655--668, 2017.

\bibitem{uplink_mMIMO_2020}
C.~D’Andrea, S.~Buzzi, and M.~Lops, ``Communications and radar coexistence in
  the massive mimo regime: Uplink analysis,'' \emph{IEEE Transactions on
  Wireless Communications}, vol.~19, no.~1, pp. 19--33, 2020.

\bibitem{Reed_2016_WCL}
J.~H. Reed, A.~W. Clegg, A.~V. Padaki, T.~Yang, R.~Nealy, C.~Dietrich, C.~R.
  Anderson, and D.~M. Mearns, ``On the co-existence of td-lte and radar over
  3.5 ghz band: An experimental study,'' \emph{IEEE Wireless Communications
  Letters}, vol.~5, no.~4, pp. 368--371, 2016.

\bibitem{bell_2014_ICEAA}
M.~R. Bell, N.~Devroye, D.~Erricolo, T.~Koduri, S.~Rao, and D.~Tuninetti,
  ``Results on spectrum sharing between a radar and a communications system,''
  in \emph{2014 International Conference on Electromagnetics in Advanced
  Applications (ICEAA)}, 2014, pp. 826--829.

\bibitem{Rathapon_2012_selectedareasincomm}
R.~Saruthirathanaworakun, J.~M. Peha, and L.~M. Correia, ``Opportunistic
  sharing between rotating radar and cellular,'' \emph{IEEE Journal on Selected
  Areas in Communications}, vol.~30, no.~10, pp. 1900--1910, 2012.

\bibitem{Sodagari_2012_NSP_GLOBECOM}
S.~Sodagari, A.~Khawar, T.~C. Clancy, and R.~McGwier, ``A projection based
  approach for radar and telecommunication systems coexistence,'' in \emph{2012
  IEEE Global Communications Conference (GLOBECOM)}, 2012, pp. 5010--5014.

\bibitem{Khawar_IEEE_Sensors_Journal_2015}
A.~Khawar, A.~Abdelhadi, and C.~Clancy, ``Target detection performance of
  spectrum sharing mimo radars,'' \emph{IEEE Sensors Journal}, vol.~15, no.~9,
  pp. 4928--4940, 2015.

\bibitem{Babaei_2013_GLOBECOM_NSPexpansion}
A.~Babaei, W.~H. Tranter, and T.~Bose, ``A nullspace-based precoder with
  subspace expansion for radar/communications coexistence,'' in \emph{2013 IEEE
  Global Communications Conference (GLOBECOM)}, 2013, pp. 3487--3492.

\bibitem{chiriyath2015inner}
A.~R. Chiriyath, B.~Paul, G.~M. Jacyna, and D.~W. Bliss, ``Inner bounds on
  performance of radar and communications co-existence,'' \emph{IEEE
  Transactions on Signal Processing}, vol.~64, no.~2, pp. 464--474, 2015.

\bibitem{cover1999elements}
T.~M. Cover, \emph{Elements of information theory}.\hskip 1em plus 0.5em minus
  0.4em\relax John Wiley \& Sons, 1999.

\bibitem{Yu_SAM_2018}
Y.~Rong, A.~R. Chiriyath, and D.~W. Bliss, ``Mimo radar and communications
  spectrum sharing: A multiple-access perspective,'' in \emph{2018 IEEE 10th
  Sensor Array and Multichannel Signal Processing Workshop (SAM)}, 2018, pp.
  272--276.

\bibitem{Chiri_TAE_2019}
A.~R. Chiriyath, S.~Ragi, H.~D. Mittelmann, and D.~W. Bliss, ``Novel radar
  waveform optimization for a cooperative radar-communications system,''
  \emph{IEEE Transactions on Aerospace and Electronic Systems}, vol.~55, no.~3,
  pp. 1160--1173, 2019.

\bibitem{Bliss_RadCon_2014}
D.~W. Bliss, ``Cooperative radar and communications signaling: The estimation
  and information theory odd couple,'' in \emph{2014 IEEE Radar Conference},
  2014, pp. 0050--0055.

\bibitem{guerci2015joint}
J.~Guerci, R.~Guerci, A.~Lackpour, and D.~Moskowitz, ``Joint design and
  operation of shared spectrum access for radar and communications,'' in
  \emph{2015 IEEE Radar Conference (RadarCon)}.\hskip 1em plus 0.5em minus
  0.4em\relax IEEE, 2015, pp. 0761--0766.

\bibitem{Chopra_TWC_2018}
R.~Chopra, C.~R. Murthy, H.~A. Suraweera, and E.~G. Larsson, ``Performance
  analysis of fdd massive mimo systems under channel aging,'' \emph{IEEE
  Transactions on Wireless Communications}, vol.~17, no.~2, pp. 1094--1108,
  2018.

\bibitem{RMT}
R.~Couillet and M.~Debbah, \emph{Random matrix methods for wireless
  communications}, 1st~ed.\hskip 1em plus 0.5em minus 0.4em\relax Cambridge
  University Press, Cambridge, UK, 2011.

\bibitem{Kolomvakis2017massive}
N.~Kolomvakis, M.~Coldrey, T.~Eriksson, and M.~Viberg, ``Massive mimo systems
  with iq imbalance: Channel estimation and sum rate limits,'' \emph{IEEE
  Transactions on Communications}, vol.~65, no.~6, pp. 2382--2396, 2017.

\bibitem{Opportunistic_Sharing_2015}
R.~Saruthirathanaworakun, J.~M. Peha, and L.~M. Correia, ``Opportunistic
  sharing between rotating radar and cellular,'' \emph{IEEE Journal on Selected
  Areas in Communications}, vol.~30, no.~10, pp. 1900--1910, 2012.

\bibitem{Jakes}
W.~C. Jakes and D.~C. Cox, Eds., \emph{Microwave Mobile Communications},
  2nd~ed.\hskip 1em plus 0.5em minus 0.4em\relax IEEE Press, New York: IEEE
  Press, 1994.

\bibitem{Buzzi_Asilomar_2019}
S.~Buzzi, C.~D’Andrea, and M.~Lops, ``Using massive mimo arrays for joint
  communication and sensing,'' in \emph{2019 53rd Asilomar Conference on
  Signals, Systems, and Computers}, 2019, pp. 5--9.

\bibitem{Buzzi_ITG_2020}
S.~Buzzi, C.~D'Andrea, and M.~Lops, ``Transmit power allocation for joint
  communication and sensing through massive mimo arrays,'' in \emph{WSA 2020;
  24th International ITG Workshop on Smart Antennas}, Feb 2020, pp. 1--6.

\bibitem{Papa_TVT_2016}
A.~K. Papazafeiropoulos, ``Impact of general channel aging conditions on the
  downlink performance of massive {MIMO},'' \emph{IEEE Transactions on
  Vehicular Technology}, vol.~66, no.~2, pp. 1428--1444, Feb. 2016.

\bibitem{li2007mimo}
J.~Li and P.~Stoica, ``Mimo radar with colocated antennas,'' \emph{IEEE Signal
  Processing Magazine}, vol.~24, no.~5, pp. 106--114, 2007.

\bibitem{sun2015mimo}
S.~Sun, W.~U. Bajwa, and A.~P. Petropulu, ``Mimo-mc radar: A mimo radar
  approach based on matrix completion,'' \emph{IEEE Transactions on Aerospace
  and Electronic Systems}, vol.~51, no.~3, pp. 1839--1852, 2015.

\bibitem{Chopra_TSP_2020}
R.~Chopra, C.~R. Murthy, H.~A. Suraweera, and E.~G. Larsson, ``Blind channel
  estimation for downlink massive mimo systems with imperfect channel
  reciprocity,'' \emph{IEEE Transactions on Signal Processing}, vol.~68, pp.
  3132--3145, 2020.

\bibitem{Papazafeiropoulos2017ageing}
A.~K. Papazafeiropoulos, ``Impact of general channel aging conditions on the
  downlink performance of massive mimo,'' \emph{IEEE Transactions on Vehicular
  Technology}, vol.~66, no.~2, pp. 1428--1442, 2017.

\bibitem{6476563}
O.~A. Oumar, M.~F. Siyau, and T.~P. Sattar, ``Comparison between music and
  esprit direction of arrival estimation algorithms for wireless communication
  systems,'' in \emph{The First International Conference on Future Generation
  Communication Technologies}, 2012, pp. 99--103.

\bibitem{Papazafeiropoulos2015Deterministic}
A.~K. Papazafeiropoulos and T.~Ratnarajah, ``Deterministic equivalent
  performance analysis of time-varying massive mimo systems,'' \emph{IEEE
  Transactions on Wireless Communications}, vol.~14, no.~10, pp. 5795--5809,
  2015.

\bibitem{Zarei2016I/Q}
S.~Zarei, W.~H. Gerstacker, J.~Aulin, and R.~Schober, ``I/q imbalance aware
  widely-linear receiver for uplink multi-cell massive mimo systems: Design and
  sum rate analysis,'' \emph{IEEE Transactions on Wireless Communications},
  vol.~15, no.~5, pp. 3393--3408, 2016.

\bibitem{Hassibi_TIT_2003}
B.~Hassibi and B.~Hochwald, ``How much training is needed in multiple-antenna
  wireless links?'' \emph{IEEE Transactions on Information Theory}, vol.~49,
  no.~4, pp. 951--963, 2003.

\bibitem{kay1993fundamentals}
S.~M. Kay, \emph{Fundamentals of statistical signal processing: estimation
  theory}.\hskip 1em plus 0.5em minus 0.4em\relax Prentice-Hall, Inc., 1993.

\end{thebibliography}
\end{document}